\newcommand{\eqref}[1]{(\ref{#1})}
\newtheorem{theorem}{Theorem}[section]
\newtheorem{proposition}[theorem]{Proposition}
\newtheorem{corollary}[theorem]{Corollary}
\newtheorem{lemma}[theorem]{Lemma}
\newcommand{\R}{\mathbb{R}}
\newcommand{\Proba}{\mathbb{P}}
\newcommand{\Cov}{\operatorname{Cov}}
\renewcommand{\P}{\mathbb{P}}
\newcommand{\E}{\mathbb{E}}
\newcommand{\telque}{\dvt }
\newcommand{\FDR}{\operatorname{FDR}}
\newcommand{\FDP}{\operatorname{FDP}}
\newcommand{\mbe}{\mathbb{E}}
\newcommand{\mbr}{\mathbb{R}}
\newcommand{\mbp}{\mathbb{P}}
\newcommand{\mbn}{\mathbb{N}}
\newcommand{\bp}{\mathbf{p}}
\newcommand{\mtc}{\mathcal}
\newcommand{\mbf}{\mathbf}
\newcommand{\wt}[1]{{\widetilde{#1}}}
\newcommand{\wh}[1]{{\widehat{#1}}}
\newcommand{\ol}[1]{\overline{#1}}
\newcommand{\bydef}{:=}
\newcommand{\al}{\alpha}
\newcommand{\eps}{\varepsilon}
\newcommand{\cC}{{\mtc{C}}}
\newcommand{\cH}{{\mtc{H}}}
\newcommand{\cF}{{\mtc{F}}}
\newcommand{\cX}{{\mtc{X}}}
\newcommand{\cP}{{\mtc{P}}}
\newcommand{\cS}{{\mtc{S}}}
\newcommand{\cN}{{\mtc{N}}}
\newcommand{\cY}{{\mtc{Y}}}
\newcommand{\kX}{{\mathfrak{X}}}
\newcommand{\kF}{{\mathfrak{F}}}
\newcommand{\kH}{{\mathfrak{H}}}
\renewcommand{\l}{\ell}
\newcommand{ \cadlag}{c\`adl\`ag }
\begin{document}
\begin{frontmatter}

\title{Testing over a continuum of null hypotheses with False
Discovery Rate control}
\runtitle{Testing over a continuum with FDR control}

\begin{aug}
\author[A]{\fnms{Gilles} \snm{Blanchard}\thanksref{A}\ead[label=e1]{gilles.blanchard@math.uni-potsdam.de}},
\author[B]{\fnms{Sylvain} \snm{Delattre}\thanksref{B}\ead[label=e2]{sylvain.delattre@univ-paris-diderot.fr}}
\and
\author[C]{\fnms{Etienne} \snm{Roquain}\corref{}\thanksref{C}\ead[label=e3]{etienne.roquain@upmc.fr}}
\runauthor{G. Blanchard, S. Delattre and E. Roquain} 
\address[A]{Universit\"at Potsdam, Institut f\"{u}r Mathematik,
Am Neuen Palais 10, 14469 Potsdam, Germany.\\ \printead{e1}}
\address[B]{Universit\'{e} Paris Diderot, LPMA, 4, Place Jussieu, 75252
Paris cedex 05,
France.\\
\printead{e2}}
\address[C]{UPMC Universit\'{e} Paris Diderot, LPMA, 4, Place Jussieu, 75252
Paris cedex 05,
France.\\
\printead{e3}}
\end{aug}

\received{\smonth{11} \syear{2011}}
\revised{\smonth{9} \syear{2012}}

%
\begin{abstract}
We consider statistical hypothesis
testing simultaneously over a fairly general, possibly uncountably infinite,
set of null hypotheses, under the assumption that a suitable single
test (and corresponding $p$-value)
is known for each individual hypothesis. We extend to this setting the
notion of false discovery rate (FDR)
as a measure of type I error.
Our main result studies specific procedures based on the observation of
the $p$-value process.
Control of the FDR at a nominal level is ensured
either under arbitrary dependence of $p$-values, or
under the assumption that the finite dimensional distributions of the
$p$-value process
have positive correlations of a specific type (weak PRDS).
Both cases generalize existing
results established in the finite setting.
Its interest is demonstrated in several non-parametric examples:
testing the mean/signal in a Gaussian white noise model, testing the
intensity of a
Poisson process and testing the c.d.f. of i.i.d. random variables.
\end{abstract}

%
\begin{keyword}
\kwd{continuous testing}
\kwd{false discovery rate}
\kwd{multiple testing}
\kwd{positive correlation}
\kwd{step-up}
\kwd{stochastic process}
\end{keyword}

\end{frontmatter}

\section{Introduction}

\subsection{Motivations}

Multiple testing is a long-established topic in statistics which has
seen a surge of interest
in the past two decades. This renewed popularity is due to a growing
range of applications (such as bioinformatics and
medical imaging) enabled by modern computational possibilities, through which
collecting, manipulating and processing massive amounts of data in very
high dimension has become commonplace.
Multiple testing is in essence a multiple decision problem: each
individual test output is a yes/no (or accept/reject) decision about a
particular question (or null hypothesis)
concerning the generating distribution of some random observed data.

The standard framework for multiple testing is to consider a finite
family of hypotheses and associated tests. However, in many cases
of interest, it is natural to interpret the observed data as the
discretization of an underlying continuously-indexed random process; each
decision (test) is then associated to one of the discretization
points. A first example is that of detecting unusually frequent words
in DNA sequences: a classical model is to consider a Poisson model for
the (non-overlapping) word occurrence process (Robin \cite{Rob2002}), the
observed data being interpreted as a discretized version of this
process. A second example is given in the context of medical imaging,
where the observed pixelized image can be interpreted as a sampled
random process, and the decision to take is, for each pixel, whether
the observed value is due to pure noise or reveals some relevant
activity (pertaining to this setting, see in particular the work of
Perone~Pacifico \textit{et~al.} \cite{PGVW2004,PGVW2007}; see also
Schwartzman, Gavrilov and
Adler \cite{SGA2011}).

Therefore, the present paper explores multiple testing for a (possibly)
uncountably infinite set of hypothesis. With some abuse of language, we
will refer to this
as the \emph{continuous setting} and use loosely the word ``continuous''
in reference to sets in order to mean:
possibly uncountably infinite.
For the above examples, this corresponds to perform testing for the
underlying continuously-indexed process in a direct manner, without
explicit discretization.

\subsection{Contribution and presentation of this work}

The principal contributions of the present work are the following. We
first define a precise, but general in scope,
mathematical setting for multiple testing over a continuous set of hypotheses,
taking particular attention to measurability issues.
Specifically,
we focus on the extension to continuously-indexed
observation (and decision) processes of so-called \emph{step-up}
multiple testing procedures, and the control of
the (continuous analogue of) their \emph{false discovery rate (FDR)}, a
type I error measure which has
gained massive acceptance in the last 15 years for testing in
high-throughput applications.
To this end, we use the tools and analysis developed by (Blanchard and Roquain
\cite{BR2008})
(a programmatic sketch
of the present work can be found in Section~{4.4} of the latter paper).
In particular, we extend suitably to the continuous setting the notion of
positive regressively dependent on a subset (PRDS) condition, which
plays a crucial role in the analysis.
The latter is a general type of dependence condition on the individual
tests' $p$-values allowing to ensure FDR control. An
important difference between the continuous and finite setting is
that the
continuous case precludes the possibility of independent $p$-values,
which is the simplest reference setting considered in the finite case,
so that a more general assumption on dependence structure is necessary
(on this point, see the discussion at the end of Section~\ref{pvalues}).

We have tried as much as possible to make this work self-contained, and
accessible
to readers having little background knowledge in multiple testing.
We begin in the next section with an extended informal discussion of
the framework considered in this paper
in relation to existing literature on non-parametric testing.
Sections~\ref{seccontsetting} and \ref{sectools} of the paper
introduce the necessary notions for multiple testing with
an angle towards stochastic processes, and some specific examples
for the introduced setting.
The main result is exposed in Section~\ref{mainresult}, followed by
its applications to the
examples introduced in Section~\ref{seccontsetting}.
The proof for the main theorem is found in Section~\ref
{secproofmainth}. Extensions and discussions come in Section~\ref
{secdiscuss}, while some technical results are deferred to
\hyperref[secPRDS-appendix]{Appendix} and to the supplementary material
(Blanchard, Delattre and
Roquain \cite{BDR2011-supp}). Throughout the paper, the numbering of the
sections and results of this supplement are preceded by ``S-'' for
clarity (by writing, e.g., Section~S-1).

\subsection{Relationship to non-parametric statistics}
\label{senptest}

Multiple testing over a continuous set of hypotheses has natural ties
with non-parametric statistics.
In the present section, we discuss this link and introduce
informally our goals.
We do not attempt a comprehensive survey of the
very broad field of non-parametric testing, but rather emphasize some
key specificities of the point
of view adopted in the current work.

\textit{Non-parametric testing}.
In order to be more concrete, consider
the classical white noise model $dZ_t = f(t)\,\mathrm{d}t + \sigma\,
\mathrm{d}B_t$, where
$B$ is a Wiener process, $t\in[0,1]$
(this model will be more formally considered as Example~\ref
{exwhitenoise}), with unknown drift function $f\in\cF$,
where $\cF$ is some a priori smoothness class.
The problem of testing various hypotheses about $f$ against
non-parametric alternatives
has received considerable attention since the seminal work of Ingster
\cite
{Ing82,Ing93}. The most common
goal is to test one single qualitative null hypothesis, for instance:
$f$ is identically zero; $f$ is non-negative;
$f$ is monotone; $f$ is convex. For concreteness, consider the first
possibility, denoted
$H^0_*:=\{f \equiv0 \}$.
A common strategy to approach this goal is to consider a collection of
test statistics of the form
$T_{\psi} := \int\psi(t)\,\mathrm{d}Z_t$ for some well chosen
family $\Psi$ of
test functions $\psi$ (normalized so that $\|\psi\|_2=1$).
For each individual test function $\psi$, we have $T_{\psi} \sim\cN
(\int f \psi, \sigma^2)$,
so that this test statistic can be used for a Gauss test of the
``local'' null hypothesis $H^0_{\psi}:=\{f \in\cF; \int f\psi=0\}
\supset H^0_*$.

Intuitively, each statistic $T_\psi$ will have power against a certain
type of alternative.
Taking into account simultaneously all test statistics $T_\psi$ over
$\psi\in\Psi$ now
constitutes a multiple test problem.
The simplest way to combine these is to consider $T_\Psi:= \sup_{\psi
\in\Psi} | T_\psi |$ and reject $H^0_*$
whenever $T_\Psi$ exceeds a certain quantile $\tau$ of its
distribution under the null $H^0_*$.
In multiple testing parlance, this is interpreted as
testing over the hypothesis family $(H^0_{\psi})_{\psi\in\Psi}$
with \emph{weak} control of the
\emph{family-wise error rate} (FWER), which is defined as the
probability to reject falsely one or more
of the considered null hypotheses. Here a local hypothesis $H^{0}_\psi
$ is interpreted as rejected
if $|T_\psi|$ exceeds~$\tau$.
The qualifier ``weak'' refers to the fact that this probability is
controlled at a nominal level only under the global null $H^0_*$ rather
than for all $f\in\cF$.

This connection has been noted and discussed in the literature; for
instance, D{\"u}mbgen and Spokoiny \cite{DS2001},
using the above type construction and multiple testing interpretation,
observe that
``whenever the null hypothesis $H^0_*$ is rejected, we have some
information about where this
violation occurs''. To formalize this idea more precisely, define the
rejection set
$R_\tau:=\{\psi\in\Psi\dvt | T_\psi | > \tau\}$ as
the set of (indices of) hypotheses from the family $(H^0_{\psi})_{\psi
\in\Psi}$ which are deemed false.
In order for $R_\tau$ to be interpretable as intended,
the threshold $\tau$ should now be chosen so that for \emph{any}
$f\in
\cF$,
the probability that the rejection set has non-empty intersection with
(indices of) the set of hypotheses satisfied by $f$, $\{\psi\in \Psi
\dvt f\in H^0_\psi\}$,
is less than a nominal level. This is called \emph{strong} control
of the FWER. This point of view appears to have been only seldom
considered explicitly in non-parametric testing literature;
a recent example is the work of Schmidt-Hieber, Munk and
Duembgen \cite{SMD2011} (in the framework of
density deconvolution),
wherein each individual null hypothesis $H^0_{\psi}$ has a qualitative
interpretation in terms of $f$ being
monotone on some subinterval.

\textit{Type \textup{I} error criteria}.
For adequate (weak or strong) control of the FWER in the example above,
it is clear that
a stochastic control of the deviations of the supremum statistic
$T_\Psi$ is necessary; this, in turn, depends
on the complexity of the set $\Psi$ (typically measured through $L_2$
metric entropy).
As a consequence, FWER-controlling procedures will be more conservative
as the complexity of the family
of the underlying test increases; for instance, in a $d$-dimensional
version of the above example,
or if we simply consider a longer observation interval, the threshold
$\tau$ would have to be more conservative (larger)
to maintain FWER at the same level.

Motivated by the multiple testing point of view,
we consider alternative, less stringent type I error criteria. Let us
still consider the white noise example, in
a slightly modified setup where we are specifically interested in the
family of null hypotheses
$(H^0_{t})_{t \in[0,1]}$ with $H^0_{t}:=\{f \in\cF; f(t) =0\}$. Each
individual null hypothesis $H^0_{t}$ can be tested
using a statistic $T_{\psi_t}$ as defined above;
more precisely, this statistic can test for the null $H^{\delta}_{\psi
_t}:=\{f \in\cF\dvt \int f\psi_t \in[-\delta,\delta]\}$
($\psi_t$ and $\delta$ being chosen adequately so that $H^0_{t}
\subset H^{\delta}_{\psi_t}$ holds).
Define similarly to earlier $R_\tau:=\{t\in[0,1]\dvt| T_{\psi _t} |
> \tau\}$, and introduce
$F_\tau(f) := R_\tau\cap\{t\in[0,1]\dvt f \in H^0_{t}\}$ the set of
incorrectly rejected hypotheses when the true drift is $f$.
To reiterate, FWER control is the requirement that $\mbp_{f}[F_\tau
(f)\neq\varnothing]$ is bounded
at a nominal level for all $f \in\cF$.
Consider now the weaker requirement that the average size of $F_\tau
(f)$ (measured through its Lebesgue measure, denoted $| \cdot |$)
is bounded at some nominal level. We observe via an application of
Fubini's theorem:
\[
\mbe_{f}\bigl[\bigl| F_\tau(f) \bigr|\bigr] = \int
_0^1 \mbp_{f}\bigl[| T_{\psi_t} | >
\tau\bigr]{\mbf{1}\bigl\{f \in H^0_{t}\bigr\}} \,\mathrm{d}t ;
\]
the averaged false reject size is simply the individual test error
level, integrated over the null hypothesis family.
It is the continuous analogue of the so-called \emph{per-comparison
error rate}
(PCER) in multiple testing. To control this quantity, clearly no
multiple testing correction is necessary,
and it is sufficient to choose $\tau$ so that each individual test
has Type I error controlled at the desired level. This criterion, however,
is not very useful in practice: only if the volume the rejection set is
much larger than
the nominal expected volume of errors can we have some trust that the
rejection set contains interesting information.
To address this issue, introduce the average \emph{volume proportion} of
falsely rejected hypotheses:
%
%
\begin{equation}
\label{defFDR}
\FDR(R_\tau,f) := \E_{f} \bigl[
\FDP(R_\tau,f) \bigr]\qquad\mbox{where } \FDP(R_\tau,f) =
\frac{| F_\tau(f) |}{| R_\tau |},
\end{equation}
with the convention $\frac{0}{0}=1$.
The acronyms FDP and FDR stand for \emph{false discovery proportion} and
\emph{rate}, respectively, and the above are
the continuous generalization of corresponding criteria introduced for
finite hypothesis spaces by Benjamini and Hochberg \cite{BH1995}
for a finite number of hypotheses, and which have gained widespread
acceptance since.
Controlling the FDR is a more difficult task than for the PCER,
because of the random denominator $|R_\tau|$ inside the expectation,
and is the main aim of this paper.

As we shall see, a crucial difference of the FDR criterion with respect
to the FWER is that, if a family of tests
of the individual hypotheses is known, there exist relatively generic
procedures called \emph{step-up} to combine
individual tests into a FDR-controlled multiple testing, by finding an
adequate, data-dependent rejection threshold.
In particular, these procedures do not depend on
an intrinsic complexity measure of the initial family, nor of the
control of deviations of suprema of statistics.

To conclude these considerations, it is worth noting that a
FDR-controlled procedure can also
be used for the goal of testing the single ``global null'' hypothesis
$H^0_{*}=\bigcap_{t\in[0,1]} H^0_{t}$, as
in the opening discussion. Namely, if $R_\tau$ is a procedure whose
FDR is controlled at level $\alpha$,
we can reject the global null hypothesis $H^0_{*}$ whenever $| R_\tau
|>0$. Under the global null,
the FDP takes only the values 0 or 1 and precisely coincides with
${\mbf{1}\{| R_\tau |>0\}}$; thus, its expectation
is the probability of type I error for testing the global null this
way, and is bounded by $\alpha$. That this can be
achieved by a generic procedure without explicitly considering the
deviations of a supremum process can seem surprising at first.
Since the focus of this paper is centered on the multiple hypothesis
testing point of view,
we will not
elaborate on this issue further, although a power comparison with the
``standard'' approaches would certainly be of interest.

\textit{
Related work}. The continuous FDR criterion using volume
ratios was introduced before by
Perone~Pacifico \textit{et~al.} \cite{PGVW2004,PGVW2007} to test
non-negativity of the mean of a
Gaussian field. In that work, the authors
follow a two-step approach where the first step consists in a
FWER-controlled multiple testing
based on suprema statistics, as delineated above. This is then used to
define an upper bound on the FDP holding with large probability,
following the principle of so-called \emph{augmentation procedures}
(van~der Laan, Dudoit and
Pollard \cite{vdLDudPol06}) for multiple tests.
An advantage of this approach is that a control of the FDP holding with
large probability is obtained
(which is stronger than a bound on the FDR, its expectation); on the
other hand,
the authors observe that since
the first step is inherently based on FWER control,
it is more conservative than a step-up procedure.
In the present work, we focus on step-up procedures, for which a
probabilistic control of the deviations of suprema statistics is not
needed; this allows us
also to address directly a broader range of applications.

\section{Setting}\label{seccontsetting}

\subsection{Multiple testing: Mathematical framework}\label{secframeworkMT}

Let $X$ be a random variable defined from a measurable space $(\Omega
,\kF)$ to some observation space $(\mathcal{X},\mathfrak{X})$.
We assume that there is a family of probability distributions on
$(\Omega,\kF)$ that induces a subset $\mathcal{P}$ of probability
distributions on $(\mathcal{X},\mathfrak{X})$, which is called the
model. The distribution of $X$ on $(\mathcal{X},\mathfrak{X})$ is
denoted by $P$; for each $P \in\mathcal{P}$ there exists a
distribution on $(\Omega,\kF)$ for which $X\sim P$; it is referred to
as $\P_{X\sim P}$ or simply by $\P$ when unambiguous. The
corresponding expectation operator is denoted $\E_{X\sim P}$ or
$\E$ for short.

We consider a general multiple testing problem for $P$, defined
as follows. Let $\cH$ denote an index space for (null) hypotheses.
To each $h \in\cH$ is associated a known subset $H_h\subset\cP$ of
probability measures on $(\cX,\kX)$.
Multiple hypothesis testing consists in taking a decision, based on
a single realization of the variable $X$, of whether for each $h\in\cH
$ it holds or not that $P\in H_h$ (which is read ``$P$ satisfies
$H_h$'', or ``$H_h$ is true''). We denote by $\cH_0(P)\bydef\{h\in
\cH\telque P \mbox{ satisfies } H_h\}$ the set of true null
hypotheses, and by its complementary $\cH_1(P)\bydef\cH\setminus\cH_0(P)$ the set of false nulls.
These sets are of course unknown because they depend on the unknown
distribution $P$. For short, we will write sometimes $\cH_0$ and $\cH_1$ instead of $\cH_0(P)$ and $\cH_1(P)$, respectively.

As an illustration, if we observe a continuous Gaussian process
$X=(X_h)_{h\in[0,1]^d}$ with a continuous mean function $\mu\dvtx h\in
[0,1]^d\mapsto\mu(t)\bydef\E X_t$, then $P$ is the distribution of
this process, $(\mathcal{X},\mathfrak{X})$ is the Wiener space and
$\mathcal{P}$ is the set of distributions generated by continuous
Gaussian processes having a continuous mean function. Typically, $\cH
=[0,1]^d$ and, for any $h$, we choose $H_h$ equal to the set of
distributions in $\mathcal{P}$ for which the mean function $\mu$
satisfies $\mu(h)\leq0$. This is usually denoted $H_h$: ``$\mu
(h)\leq0$''.
Finally, the set $\cH_0(P)=\{h\in[0,1]^d \telque\mu(h)\leq0\}$
corresponds to the true null hypotheses.
Several other examples are provided below in Section~\ref{secex}.

Next, for a more formal definition of a multiple testing procedure,
we first assume the following:
%
%
\renewcommand{\theequation}{A\arabic{equation}}
\setcounter{equation}{0}
\begin{equation}\label{A1}
\begin{tabular}{p{300pt}@{}}
The index space $\cH$ is endowed with a $
\sigma$-algebra $\mathfrak{H}$ and for all $P \in\cP$,
the set $\cH_0(P)$ of true nulls is assumed to be measurable,
that is, $\cH_0(P) \in\mathfrak{H}$.
\end{tabular}
\end{equation}

%
\begin{definition}[(Multiple testing procedure)]\label{defMTP}
Let $X\dvtx(\Omega,\kF) \rightarrow(\mathcal{X},\mathfrak{X})$ be a
random variable, $\cP$ a model of
distributions of $\cX$, and $\cH$ an index set of null hypotheses.
Assume \textup{\eqref{A1}} holds.
A~multiple testing procedure on $\cH$ is a function $R\dvtx X(\Omega)
\subset\cX\rightarrow\mathfrak{H}$ such that the set
\[
\bigl\{ (\omega,h)\in\Omega\times\cH\dvt h \in R \bigl(X(\omega ) \bigr) \bigr\}
\]
is a $\kF\otimes\kH$-measurable set; or in other terms, that
the process $({\mbf{1}\{h \in R(X)\}})_{h \in\cH}$ is a \emph
{measurable process}.
\end{definition}
The fact that $R$ need only be defined
on the image $X(\Omega)$, rather than on the full space $\cX$, is a
technical detail necessary for later coherence; this introduces no
restriction since $R$ will only be ever applied to possible observed
values of $X$.

A multiple testing procedure $R$ is interpreted as follows:
based on the observation $x=X(\omega)$,
$R(x)$ is the set of null hypotheses that are deemed to be false,
also called set of \emph{rejected} hypotheses.
The set $\cH_0(P)\cap R(x)$ formed of true null hypotheses that are rejected
in error is called the set of \emph{type \textup{I} errors}. Similarly,
the set $\cH_1(P)\cap R^c(x)$ is that of \emph{type \textup{II} errors}.

\subsection{The $p$-value functional and process}
\label{pvalues}

We will consider a very common framework for multiple testing,
where the decision for each null hypothesis $H_h, h\in\cH$, is taken
based on a scalar statistic $p_h(x)\in[0,1]$ called a $p$-value.
The characteristic property of a $p$-value statistic is that if the
generating distribution $P$ is such that the
corresponding null hypothesis is true (i.e., $h \in\cH_0(P)$),
then the random variable $p_h(X)$ should be stochastically lower bounded
by a uniform random variable. Conversely, this statistic is generally
constructed in
such a way that if the null hypothesis $H_h$ is false, its distribution
will be
more concentrated towards the value 0. Therefore, a $p$-value close to
0 is
interpreted as evidence from the data against the validity of the null
hypothesis,
and one will want to reject hypotheses having lower $p$-values.
Informally speaking, based on observation $x$,
the construction of a multiple testing procedure
generally proceeds as follows:
\begin{itemize}[(iii)]
\item[(i)] compute the $p$-value $p_h(x)$ for each individual null
index $h\in\cH$;
\item[(ii)] determine a threshold $t_h(x)$ for each $h\in\cH$,
depending on the whole family $(p_h(x))_{h \in\cH}$;
\item[(iii)] put $R(x) = \{h \in\cH\dvt p_h(x) \leq t_h(x)\}$.
\end{itemize}
To summarize, the rejection set consists of hypotheses whose $p$-values
are lower than a
certain threshold, this threshold being
itself random, depending on the observation $x$ and possibly depending
on $h$.
This will be elaborated in more detail in Section \ref{testproc},
in particular how the threshold function $t_h(x)$ is chosen.
For now, we focus on properly defining the $p$-value functional itself,
the associated process,
and the assumptions we make on them.

Formally, we define the \textit{$p$-value functional} as a mapping $\bp:
\cX\rightarrow[0,1]^\cH$,
or equivalently as a collection of functions $\bp=(p_h(x))_{h\in\cH
}$, each of the functions $p_h \dvtx\cX\rightarrow[0,1]$, $h \in\cH$,
being considered as a scalar statistic that can be computed from the
observed data $x\in\cX$.

We will consider correspondingly the random $p$-values $\omega\in
\Omega\mapsto p_h(X(\omega))$, and $p$-value process $\omega\in
\Omega\mapsto\bp(X(\omega))$.
With some notation overload, we will sometimes drop the dependence on
$X$ and
use the notation $p_h$ and $\bp$ to also denote the \emph{random variables}
$p_h(X)$ and $\bp(X)$ (the meaning -- function of $x$, or random
variable on $\Omega$ --
should be clear from the context).

We shall make the following assumptions on the $p$-value process:
\begin{itemize}
\item Joint measurability over $\Omega\times\cH$:
we assume that the random process $(p_h(X))_{h \in\cH}$ is
a measurable process, that is:
%
%
\begin{equation}
\label{hypmesomega}
\begin{tabular}{p{300pt}@{}}
$(\omega,h)\in(\Omega\times\cH,\kF\otimes
\mathfrak{H} ) \mapsto p_h \bigl(X(\omega) \bigr) \in[0,1]$ \qquad is
(jointly) measurable.
\end{tabular}
\end{equation}

\item For any $P \in\cP$,
the marginal distributions of the $p$-values corresponding to true
nulls are stochastically lower bounded by a uniform random variable on $[0,1]$:
%
%
\begin{equation}
\forall h\in\cH_0(P)\qquad \forall u\in[0,1],\qquad \Proba_{X\sim
P}
\bigl(p_h(X)\leq u \bigr)\leq u.\label{equproppvalues}
\end{equation}
(The distribution of $p_h$ if $h$ lies in $\cH_1(P)$ can be arbitrary).
\end{itemize}

Condition \textup{\eqref{hypmesomega}} is specific to the continuous setting
considered here and will be discussed in more detail in the next section.
Condition \textup{\eqref{equproppvalues}} is the standard characterization
of a single $p$-value statistic in classical (single or multiple)
hypothesis testing.
In general, an arbitrary scalar statistic used to take the rejection
decision on hypothesis $H_h$ can be
monotonically normalized into a $p$-value as follows:
assume $S_h(x)$ is a scalar test statistic, then
\[
p_h(x) = \sup_{P \in H_h} F_{h,P} \bigl(S_h(x)
\bigr)
\]
is a $p$-value in the sense of \textup{\eqref{equproppvalues}},
where $F_{h,P}(t)=\P_{X\sim P}(S_h(X)\geq t)$
(and where the supremum is assumed to maintain the measurability in
$x$, for any fixed $h$).
If the scalar statistic $S_h(x)$ is constructed so that it tends to be
stochastically
larger when hypothesis $H_h$ is false, the corresponding $p$-value
indeed has the desirable property that it is
more concentrated towards 0 in this case. Such test statistics abound in
the (single) testing literature, and a few examples will be given below.

\subsection{Discussion on measurability assumptions}\label{secdiscuss-mes}

Since the focus of the present work is to be able to deal with
uncountable spaces of hypotheses~$\cH$, we have to be somewhat careful
with corresponding measurability assumptions over $\cH$ (a
problem that does not arise when $\cH$ is finite or countable).
The main assumption needed in this regard in order to state properly the
results to come is the joint measurability assumption
appearing in either Definition~\ref{defMTP} (for the multiple testing
procedure) or in \textup{\eqref{hypmesomega}} (for the $p$-value process),
both of which are specific to the uncountable setting.
Essentially, joint measurability will be necessary in order to use
Fubini's theorem on the space $ ( \Omega\times\cH,\mathfrak
{F}\otimes\mathfrak{H} )$, and have the expectation operator
w.r.t. $\omega$ and the integral operator over $\cH$ commute.

If $\cH$ has at most countable cardinality, and is endowed with the
trivial $\sigma$-field comprising all subsets of $\cH$, then \textup{\eqref
{hypmesomega}} is automatically satisfied whenever all
individual $p$-value functions
$p_h\dvtx\cX\rightarrow[0,1]$, $h \in\cH$, are separately measurable,
which is the standard setting in multiple testing.

If $\cH$ is uncountable, a sufficient condition ensuring \textup{\eqref{hypmesomega}}
is the joint measurability of the $p$-value \emph{functional},
%
%
\renewcommand{\theequation}{A\arabic{equation}$^{\prime}$}
\setcounter{equation}{1}
\begin{equation}
\label{hypmesx}
\begin{tabular}{p{300pt}@{}}
$(x,h)\in(\cX\times\cH,\kX\otimes\mathfrak{H} )
\mapsto p_h(x) \in[0,1]$\qquad is (jointly) measurable,
\end{tabular}
\end{equation}
which implies \textup{\eqref{hypmesomega}} by composition. Unfortunately,
\textup{\eqref{hypmesx}} might not always hold.
To see this, consider the following canonical example. Assume the
observation takes the form of a stochastic process indexed by the
hypothesis space itself, $X= \{X_h,h\in\cH\}$. In this case, the
observation space $\cX$ is included in $\mathbb{R}^\cH$.
Furthermore, assume the $p$-value
function $p_h(x)$ is given by a fixed measurable mapping $\psi$ of the
value of $x$ at point $h$, that is, $p_h(x) = \psi(x_h), \forall h
\in\cH$.
In this case
assumption \textup{\eqref{hypmesx}} boils down to the joint measurability of
the evaluation mapping $(x,h) \in\cX\times\cH\mapsto x_h$. Whether
this holds depends on the nature of the space $\cX$. We give some
classical examples in the
next section where the assumption holds;
for example, it is true if $\cX$ is the
Wiener space.\vadjust{\goodbreak}

However, the joint measurability of the evaluation mapping does not
hold if $\cX$
is taken to be the product space $\mbr^\cH$
endowed with the canonical product $\sigma$-field
(indeed, this would imply that any $x\in\mbr^\cH$, i.e.,
any function from $\cH$ into $\mbr$, is measurable).
The more general assumption~\textup{\eqref{hypmesomega}} may still hold, though, but
it generally requires some additional regularity or structural
assumptions on the paths of the process $X$.
In particular, in the above example
if $X= \{X_h,h\in\cH\}$ is a stochastic process having a (jointly)
measurable modification
(and more generally for other examples, if there exists a modification
of $X$
such that \textup{\eqref{hypmesomega}} is satisfied),
we will always assume that we observe such a modification,
so that assumption \textup{\eqref{hypmesomega}} holds.

We have gathered in Section~S-1 of the supplementary
material Blanchard, Delattre and
Roquain \cite{BDR2011-supp} some auxiliary (mostly classical) results
related to
the existence and properties of such modifications.
Lemma~S-1.2 shows that such a (jointly) measurable modification
exists as soon as the process is continuous in probability.
The latter is not an iff condition, but is certainly much weaker
than having continuous paths.

On the other hand, it is important to observe here that a jointly measurable
modification of~$X$, or, for that matter, of the $p$-value process,
might not exist.
Lemma~S-1.1 reproduces a classical argument showing that for
$\cH=[0,1]$, assumption
\textup{\eqref{hypmesomega}} is violated for any modification of a mutually
independent $p$-value process.
Therefore, for an uncountable space of hypotheses~$\cH$, assumption
\textup{\eqref{hypmesomega}} precludes
the possibility that the $p$-values $\{p_h,h\in\cH\}$ are mutually
independent.
This contrasts strongly with the situation of
a finite hypothesis set~$\cH$, where mutual independence of the
$p$-values is
generally considered the reference case.

A final issue is to which extent the results exposed in the remainder
of this work
depend on the (jointly) measurable modification chosen for the
underlying stochastic process.
Lemma~S-1.4 elucidates this issue by showing that this
is not the case,
because the FDR (the main measure of type I error, which will be
formally defined in Section~\ref{seFDRdef})
is identical for two such modifications.

\subsection{Examples}\label{secex}

To illustrate the above generic setting, let us consider the following examples.

%
\begin{example}[(Testing the mean of a process)]\label{exmeanprocess}
Assume that we observe the realization of a real-valued process
$X=(X_t)_{t\in[0,1]^d}$ with an unknown (measurable) mean function
$\mu\dvtx t\in[0,1]^d\mapsto\mu(t)\bydef\E X_t$.
We take $\cH=[0,1]^d$ and want to test simultaneously for each $t\in
[0,1]^d$ the null hypothesis $H_t$: ``$\mu(t)\leq0$''.
Assume that for each $t$ the marginal distribution of $(X_t-\mu(t))$
is known,
does not depend on $t$ and has upper-tail function $G$ (e.g.,
$X$ is a Gaussian process with marginals $X_t\sim\mathcal{N}(\mu
(t),1)$). We correspondingly define the $p$-value process $\forall t
\in[0,1]^d$, $p_t(X)=G(X_t)$, which satisfies \textup{\eqref
{equproppvalues}}. Next, the measurability assumption \textup{\eqref
{hypmesomega}} follows from a regularity assumption on $X$:
\begin{itemize}
\item if we assume that the process $X$ has continuous
paths, $X \dvtx\omega\mapsto(X_t(\omega))_t$ can be seen as taking
values in the Wiener space $\cX=\mathcal
{C}_{[0,1]^d}=C([0,1]^d,\mathbb{R})$ of continuous functions from
$[0,1]^d$ to $\mathbb{R}$. (In this case,
the Borel $\sigma$-field corresponding to the supremum norm topology
on $\cC_{[0,1]^d}$ is the trace of
the product $\sigma$-field on $\mathcal{C}_{[0,1]^d}$, and $X$ is
measurable iff all its coordinate projections are.) Furthermore,
the $p$-value function can be written as
\[
(x,t)\in\cC_{[0,1]^d}\times[0,1]^d \mapsto p_t(x)=G
\bigl(x(t) \bigr) \in[0,1].
\]
The evaluation functional $(x,t)\in\cC_{[0,1]^d} \times[0,1]^d
\mapsto x(t)$ is jointly measurable because it is continuous, thus
$p_t(x)$ is jointly measurable by composition and \textup{\eqref{hypmesx}}
holds, hence also~\textup{\eqref{hypmesomega}};
\item if $d=1$ and the process $X$ is \cadlag, the
random variable $X$ can be seen as taking values in the Skorohod space
$\cX=\mathcal{D}\bydef D([0,1],\mathbb{R})$ of \cadlag functions
from $[0,1]$ to $\mathbb{R}$. In this case, the Borel $\sigma$-field
generated by the Skorohod topology is also the trace of
the product $\sigma$-field on $\mathcal{D}$ (see, e.g., Theorem~14.5
page~121 of Billingsley \cite{Bill1999}). Moreover, the evaluation functional
$(x,t)\mapsto x(t)$ is jointly measurable, as for any
\cadlag funtion~$x$, it is the pointwise limit of the jointly
measurable functions $\zeta_n$:
$
(x,t)\mapsto\zeta_n(x,t) \bydef\sum_{k=1}^{2^n} x(k2^{-n}) {\mbf
{1}\{(k-1)2^{-n}\leq t < k2^{-n}\}}+x(1){\mbf{1}\{t=1\}},
$
therefore \textup{\eqref{hypmesx}} is fulfilled by composition, hence
also~\textup{\eqref{hypmesomega}};

\item
assume that $X$ is a Gaussian
process defined on the space $\cX=\mbr^{[0,1]^d}$ endowed with the
canonical product $\sigma$-field, and with a covariance function
$\Sigma(t,t')$ such that $\Sigma$ is continuous on all points
$(t,t)$ of the diagonal and takes a constant (known) value $\sigma^2$
on those points.

This assumption is not sufficient to ensure that $X$ has a continuous
version, but it ensures that $(X_t)$ is continuous in $L^2$ and hence
in probability; Lemma~S-1.2 then states that $X$ has a
modification such that the evaluation functional is jointly measurable.
Assuming that such a jointly measurable modification is observed,
we deduce that \textup{\eqref{hypmesomega}} holds for the associated
$p$-value process.
\end{itemize}
\end{example}

%
\begin{example}[(Testing the signal in a Gaussian white noise
model)]\label{exwhitenoise}
Let us consider the Gaussian white noise model
$dZ_t = f(t)\,\mathrm{d}t + \sigma\,\mathrm{d}B_t$, $t\in[0,1]$,
where $B$ is a Wiener
process on $[0,1]$ and $f\in C([0,1])$ is a continuous signal function.
For simplicity, the standard deviation $\sigma$ is assumed to be equal
to $1$.
Equivalently, we assume that we can observe the stochastic integral of
$Z_t$ against any test function in $L^2([0,1])$, that is, that we
observe the Gaussian process $(X_g)_{g\in L^2([0,1])}$ defined by
\[
X_g \bydef\int_0^1 g(t) f(t) \,
\mathrm{d}t + \int_0^1 g(t) \,
\mathrm{d}B_t,\qquad g\in L^2 \bigl([0,1] \bigr).
\]
Formally, the observation space is the whole space $\cX=\R^{L^2([0,1])}$, endowed with the product $\sigma$-field.
However, in the sequel, we will use the observation of the process $X$
only against a ``small'' subspace of functions of $L^2([0,1])$.

Let us consider $\cH=[0,1]$ and the problem of testing for each $t\in
\cH$, the null $H_t$: ``$f(t)\leq0$'' (signal non-positive).
We can build $p$-values based upon a kernel estimator in the following way.
Consider a kernel function $K\in L^2( \R)$, assumed positive on
$[-1,1]$ and zero elsewhere, and denote by $K_t \in L^2([0,1])$ the
function $K_t(s) \bydef K((t-s)/\eta)$, where $0<\eta\leq1$ is a
bandwidth to be chosen.\vadjust{\goodbreak}
Let us consider the process $\wt{X}_t:=X_{K_{t}}, t\in[0,1]$. From
Lemma~S-1.3, $\wt{X}$ has a modification which is
jointly measurable in $(\omega,t)$. Clearly, this implies that there
exists a modification of the original process $X$ such that $\wt{X}$
is jointly measurable in $(\omega,t)$, and we assume that we observe
such a modification.
For any $t\in[0,1]$, letting $c_{K,t}\bydef\int_0^1 K((t-s)/\eta)
\,\mathrm{d}s>0$ and $v_{K,t}\bydef\int_0^1 K^2((t-s)/\eta) \,
\mathrm{d}s\geq
c_{K,t}^2>0$, we can consider the following standard estimate of $f(t)$:
%
%
\renewcommand{\theequation}{\arabic{equation}}
\setcounter{equation}{1}
\begin{eqnarray}\label{equestimatefdet}
\wh{f}_\eta(t)&\bydef& c_{K,t}^{-1}
X_{K_{t}}
\nonumber
\\[-8pt]
\\[-8pt]
\nonumber
&= &c_{K,t}^{-1} \int_0^1 K
\biggl(\frac{t-s}{\eta}\biggr) f(s)\,\mathrm{d}s + c_{K,t}^{-1}
\int_0^1 K\biggl(\frac{t-s}{\eta}\biggr) \,
\mathrm{d}B_s.
\end{eqnarray}
Assume that there is a known $\delta_{t,\eta}>0$ such that for any
$t$ with $f(t)\leq0$, we have the upper-bound
%
%
\begin{equation}
\label{eqregcondf} \E\wh{f}_\eta(t) = c_{K,t}^{-1}
\int_0^1 K\biggl(\frac{t-s}{\eta }\biggr) f(s)
\,\mathrm{d}s \leq\delta_{t,\eta}.
\end{equation}
For instance, this holds if we can assume a priori knowledge on the regularity
of $f$, of the form $\sup_{s:|s-t|\leq\eta} |f(s)-f(t)| \leq\delta_{t,\eta}$.
Then, the statistics $(\wh{f}_\eta(t))_t$ can be transformed into a
$p$-value process
in the following way:
%
%
\begin{equation}
\label{equpvalueBBG} p_t(X)=\overline{\Phi} \biggl(\frac{\wh
{f}_\eta(t)-\delta_{t,\eta
}}{ v_{K,t}^{1/2}/c_{K,t}}
\biggr),
\end{equation}
where $\overline{\Phi}(w)\bydef\P(W\geq w)$, $W\sim\mathcal
{N}(0,1)$, is the upper tail distribution of a standard Gaussian distribution.
The $p$-value process \eqref{equpvalueBBG} satisfies \textup{\eqref
{equproppvalues}}, because
for any $t$ with $f(t)\leq0$ and any $u\in[0,1]$,
\begin{eqnarray*}
\P \bigl( p_t(X)\leq u \bigr) &=& \P \bigl( \wh{f}_\eta(t)-
\delta_{t,\eta} \geq v_{K,t}^{1/2}/c_{K,t} \ol{
\Phi}^{-1}(u) \bigr)
\\
&\leq&\P \bigl( c_{K,t} \bigl(\wh{f}_\eta(t)-\E
\wh{f}_\eta(t) \bigr) /v_{K,t}^{1/2} \geq\ol{
\Phi}^{-1}(u) \bigr)
\\
&=& u,
\end{eqnarray*}
because $\int_0^1 K_t(s) \,\mathrm{d}B_s \sim\mathcal{N}(0, v_{K,t})$.
Moreover, the $p$-value process \eqref{equpvalueBBG} satisfies \textup{\eqref
{hypmesomega}}, since we assumed $(X_{K_{t}})_t\in[0,1]$ to be
jointly measurable in $(\omega,t)$.
\end{example}

%
\begin{example}[(Testing the c.d.f.)]\label{excdf}
Let $X=(X_1,\ldots,X_m)\in\cX=\mathbb{R}^m$ be a $m$-uple of i.i.d.
real random variables of common continuous c.d.f. $F$. For $\cH= I$ an
interval of $\R$ and a given benchmark c.d.f. $F_0$, we aim to test
simultaneously for all $t\in I $ the null $H_ t\dvt$ ``$F(t)\leq F_0(t)$''.
The individual hypothesis $H_ t$ may be tested using the $p$-value
%
%
\renewcommand{\theequation}{\arabic{equation}}
\setcounter{equation}{4}
\begin{equation}
\label{pvaluefit} p_t(X)= G_t \bigl(m
\mathbb{F}_m(X,t) \bigr),
\end{equation}
where
$\mathbb{F}_m(X,t)=m^{-1} \sum_{i=1}^m {\mbf{1}\{X_i\leq t\}}$ is the
empirical c.d.f. of $X_1,\ldots,X_m$ and where $G_t(k)=\mbp[Z_t\geq k]$, $Z_t\sim\mathcal{B}(m,F_0(t))$, is the upper-tail
function of a
binomial distribution of parameter $(m,F_0(t))$.
The conditions \textup{\eqref{hypmesomega}} and \textup{\eqref{equproppvalues}} are
both clearly satisfied.

Figure~\ref{figtestcdf} provides a realization of the $p$-value
process \eqref{pvaluefit}~when testing for all $t\in[0,1] $ the null
$H_ t\dvt$ ``$F(t)\leq t$'' when $F$ comes from a mixture of beta
distributions. The correct/erroneous rejections are also pictured for
the simple procedure $R(X)=\{t\in[0,1]\telque p_t(X)\leq0.4\}$.
\end{example}

%
\begin{example}[(Testing the intensity of a Poisson process)]\label{exPoisson}
Assume we observe $(N_t)_{t\in[0,1]}\in\cX=D([0,1],\mathbb{R})$ a
Poisson process with intensity $\lambda\dvtx [0,1] \rightarrow\R^+ \in
L^1(d\Lambda)$, where $\Lambda$ denotes the Lebesgue measure on
$[0,1]$. For each $t\in[0,1]$, we aim to test $H_ t$: ``$\lambda
(t)\leq\lambda_0(t)$''
where $\lambda_0(\cdot)>0$ is a given benchmark intensity.
Assume that for a given bandwidth $\eta\in(0,1]$, there is a known
upper bound $\delta_{t,\eta}$ for $\int_{(t-\eta)\vee0}^{(t+\eta
)\wedge1} \lambda(s)\,\mathrm{d}s$ that holds true for any $t$ such that
$\lambda(t)\leq\lambda_0(t)$.

%
\begin{figure}

\includegraphics{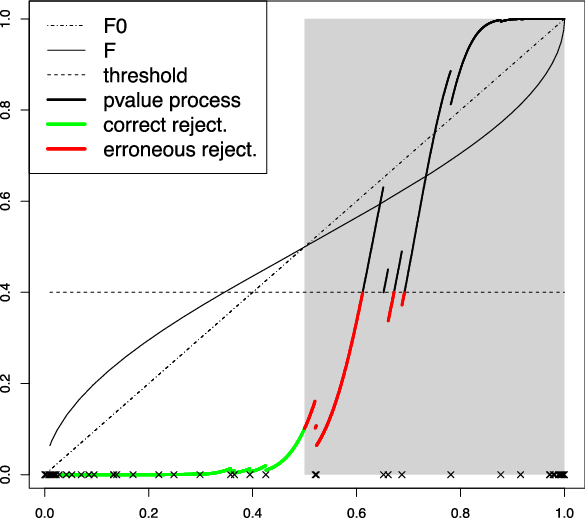}

\caption{Plot of a realization of the $p$-value process as defined in
\protect\eqref{pvaluefit}
for the c.d.f. testing, together with $F_0$ and $F$, for $F_0(t)=t$ and
$F(t)=0.5 F_1(t)+0.5 F_2(t)$, where $F_1$ (resp., $F_2$) is the c.d.f.
of a beta distribution of parameter $(0.5, 1.5)$ (resp., $(1.5,0.5)$).
The region where the null hypothesis ``$F(t)\leq F_0(t)$'' is true is
depicted in grey color.
The crosses correspond to the elements of $\{X_{i},1\leq i\leq m\}$; $m=50$.
The correct/erroneous rejections refer to the procedure $R(X)=\{t\in
[0,1]\telque p_t(X)\leq0.4\}$ using the threshold~$0.4$.}\vspace*{2pt}
\label{figtestcdf}
\end{figure}

For instance, we can choose $\delta_{t,\eta}=((t+\eta)\wedge
1-(t-\eta)\vee0)(\lambda_0(t) + \sup_{s:|t-s|\leq\eta}|\lambda
(t)-\lambda(s)|)$
(assuming knowledge on the regularity of $\lambda$ is available a priori).
For any $t\in[0,1]$, the variable $N_{(t+\eta)\wedge1}-N_{(t-\eta
)\vee0}$ follows a Poisson variable of parameter $\int_{(t-\eta)\vee
0}^{(t+\eta)\wedge1} \lambda(s)\,\mathrm{d}s$. Since the latter
parameter is
smaller than $\delta_{t,\eta}$
as soon as $\lambda(t)\leq\lambda_0(t)$, the following $p$-value
process satisfies~\textup{\eqref{equproppvalues}}:
%
%
\renewcommand{\theequation}{\arabic{equation}}
\setcounter{equation}{5}
\begin{equation}
\label{pvaluespois} p_t(X)=G_t (N_{(t+\eta)\wedge1}-N_{(t-\eta)\vee0}
),
\end{equation}
where for any $k\in\mathbb{N}$, $G_t(k)$ denotes $\mbp[Z\geq k]$
for $Z$ a Poisson distribution of parameter
$\delta_{t,\eta}$.
Moreover, the $p$-value process fulfills condition \textup{\eqref{hypmesx}},
because $(N_t)$ is a \cadlag process, so that arguments similar to
those of Example \ref{exmeanprocess} apply. Thus, \textup{\eqref
{hypmesomega}} also holds.
\end{example}

\section{Main concepts and tools}\label{sectools}

\subsection{False discovery rate}
\label{seFDRdef}

Following the usual philosophy of hypothesis testing,
one wants to ensure some control over type I errors committed by the procedure.
As discussed in Section~\ref{senptest}, in the present work we focus
on a generalization to a continuum
of hypotheses of the \emph{False Discovery Rate} (FDR).
For a finite number of null hypotheses, the FDR, as introduced by
Benjamini and Hochberg \cite
{BH1995} (see also Seeger \cite{See1968}),
is defined as the average proportion
of type I errors in the set of all rejected hypotheses. To extend this
definition to a possibly uncountable space,
following Perone~Pacifico \textit{et~al.} \cite{PGVW2004,PGVW2007},
we quantify this proportion by a volume ratio, defined with respect to
a finite measure $\Lambda$ on $(\cH, \mathfrak{H})$ (the usual
definition over a finite space is recovered by taking $\Lambda$ equal
to the counting measure).

%
\begin{definition}[(False discovery proportion, false discovery rate)]
Let $\Lambda$ be a finite positive measure on $(\cH,\kH)$.
Let $R$ be a multiple testing procedure on $\cH$. The false
discovery rate (FDR) of~$R$ is defined as the average of the false
discovery proportion (FDP):
%
%
\renewcommand{\theequation}{\arabic{equation}}
\setcounter{equation}{6}
\begin{equation}
\label{defFDP} \forall P \in\cP, \forall x \in X(\Omega),\qquad \FDP \bigl(R(x),P
\bigr)\bydef\frac{\Lambda(R(x)\cap\cH_0(P) )}{\Lambda
(R(x) )} {\mbf{1}\bigl\{\Lambda \bigl(R(x) \bigr)>0\bigr\}}
\end{equation}
and
%
%
\begin{equation}
\forall P \in\cP,\qquad \FDR(R,P)\bydef\mbe_{X \sim
P}\bigl[\FDP
\bigl(R(X),P \bigr)\bigr].
\end{equation}
\end{definition}
The indicator function in \eqref{defFDP} means that the
ratio is taken equal to zero whenever the denominator is zero.
Observe that, due to the joint measurability assumption in
Definition~\ref{defMTP} of a multiple testing procedure,
both of the above quantities are well-defined (the FDP is
only formally defined over the image of $\Omega$ through $X$
since only on this set is the measurability of
$R(x)$ guaranteed by the definition.
In particular, it is defined for $P$-almost all $x\in\cX$).

As illustration, in the particular realization of the $p$-value process
pictured in Figure~\ref{figtestcdf}, if we denote by ``Red'' (resp.,
``Green'') the length of the interval corresponding to the projection of
the red (resp., green) part of the $p$-value process on the $X$-axis,
the FDP of the procedure $R(X)=\{t\in[0,1]\telque p_t(X)\leq0.4\}$ is
$\mathrm{Red}/(\mathrm{Red} + \mathrm{Green})$. A similar interpretation for the FDP holds in
Figure~\ref{figtestpois}.\vadjust{\goodbreak}

Finding a procedure $R$ with a $\FDR$ smaller than or equal to $\alpha
$ has the following interpretation: on average, the volume proportion
of type I errors among the rejected hypotheses is smaller than $\alpha
$. This means that the procedure is allowed to reject in error some
true nulls but in a small (average) proportion among the rejections.
For a pre-specified level $\alpha$, the goal is then to determine
multiple testing procedures $R$ such that for any $P \in\cP$, it
holds that $\FDR(R,P)\leq\alpha$.
(In fact, the statement need only hold for $P \in\cP\cap\bigcup_{h\in\cH}H_h$, since outside of this set $\cH_0(P)=\varnothing$ and
the FDR is 0.) The rest of the paper will concentrate on establishing
sufficient conditions under which the FDR is controlled at a fixed
level $\alpha$. Under this constraint, in order to get a procedure
with good power properties (i.e., low type II error), it is,
generally speaking, desirable that $R$ rejects as many nulls as
possible, that is, has volume $\Lambda(R)$ as large as
possible.

\subsection{Step-up procedures}\label{secstep-up} \label{testproc}

In what follows, we will focus on a particular form of multiple testing
procedures which can be written as function of the $p$-value family
$\mathbf{p}(x)=(p_h(x))_{h\in\cH}$.

First, we define a parametrized family of possible rejection sets
having the following form: for a given \emph{threshold function}
$\Delta
\dvt(h,r) \in\cH\times\mathbb{R}^+\mapsto\mathbb{R}^+$, we define
for any $r\geq0$ the sub-level set
%
%
\renewcommand{\theequation}{\arabic{equation}}
\setcounter{equation}{8}
\begin{equation}
\label{equlevelset} \forall x \in\cX,\qquad L_\Delta(x,r) \bydef \bigl\{ h\in
\cH \telque p_h(x) \leq\Delta(h,r) \bigr\} \subset\cH.
\end{equation}
For short, we sometimes write $L_\Delta(r)$ instead of $L_\Delta
(x,r)$ when unambiguous.
We will more particularly focus on threshold functions $\Delta$ of the
product form $\Delta(h,r)=\alpha\pi(h) \beta(r)$, where $\alpha\in
(0,1)$ is a positive scalar (\emph{level}), $\pi\dvtx \cH\rightarrow
\mathbb{R}^+$ is measurable (\emph{weight function}), and $\beta\dvtx
\mathbb{R}^+ \rightarrow\mathbb{R}^+$ is non-decreasing and
right-continuous (\emph{shape function}).
Clearly, this decomposition is not unique, but will be practical for
the formulation of the main result.

Given a threshold function $\Delta$ of the above form, we will be
interested in a particular, data-dependent choice of the parameter $r$
determining the rejection set, called \emph{step-up procedure}.

%
\begin{definition}[(Step-up procedure)]\label{defcontstepup}
Let $\Delta(h,r)=\alpha\pi(h) \beta(r)$ a threshold function with
$\alpha\in(0,1)$; $\pi\dvtx\cH\rightarrow\mathbb{R}^+$ measurable and
$\beta\dvtx\mathbb{R}^+\rightarrow\mathbb{R}^+$ non-decreasing and
right-continuous.
Then the \textit{step-up multiple testing procedure} $R$ on $(\cH
,\Lambda)$ associated to
$\Delta$,
is defined by
%
%
\begin{equation}\label{equSU} \hspace*{-16pt}
\forall x \in X(\Omega),\qquad R(x) = L_\Delta \bigl(x,\wh{r}(x)
\bigr) \quad\mbox{where } \wh{r}(x)\bydef\max \bigl\{r \geq0 \telque \Lambda
\bigl(L_\Delta(x,r) \bigr)\geq r \bigr\} .
\end{equation}
\end{definition}

Note that $\wh{r}$ above is well-defined:
first, since $x \in X(\Omega)$ and from assumption \textup{\eqref
{hypmesomega}}, the function $h\mapsto p_h(x)-\alpha\pi(h) \beta
(r)$ is measurable;
thus $L_\Delta(x,r)$ is a measurable set of $\cH$, which in turn
implies that $\Lambda(L_\Delta(x,r))$ is well-defined.
Secondly, the supremum of $\{r \geq0 \telque\Lambda(L_\Delta
(x,r))\geq r\}$ exists because $r=0$ belongs to this set and $M=\Lambda
(\cH)$ is an upper bound. Third, this supremum is a maximum because
the function $r\mapsto\Lambda(L_\Delta(x,r))$ is non-decreasing
(right-continuity is not needed for this).

We should ensure in Definition~\ref{defcontstepup} that a step-up procedure
satisfies the measurability requirements of Definition~\ref{defMTP}.
This is proved separately in Section~\ref{secstep-upproof}.
In that section, we also check that the equality $\Lambda(L_\Delta
(x,\wh{r}(x)))=\wh{r}(x)$ always holds. Hence, $\wh{r}(x)$ is the
largest intersection point between the function $r\mapsto\Lambda
(L_\Delta(x,r))$
giving the volume of the candidate rejection sets as a function of $r$,
and the identity line $r\mapsto r$.

To give some basic intuition behind the principle of a step-up
procedure, consider for simplicity
that $\pi$ is a constant function, so that the family defined by
\eqref{equlevelset} are
ordinary sub-level sets of the $p$-value family. The goal is to find a
suitable common
rejection threshold $t$ giving rise to rejection set $R_t$. Assume also
without loss of generality
that $\Lambda(\cH)=1$.
Now consider the following heuristic. If the threshold $t$
is deterministic, any $p$-value associated to a true null hypothesis,
being stochastically lower bounded by
a uniform variable, has probability less than $t$ of being rejected in
error. Thus, we
expect on average a volume $t \Lambda(\cH_0) \leq t$ of erroneously rejected
null hypotheses. If we therefore use $t$ as a rough upper bound of the
numerator in the definition
\eqref{defFDP} of the FDP or FDR, and we want the latter to be less
than $\alpha$, we obtain the
constraint $t/\Lambda(R_t) \leq\alpha$, or equivalently $\Lambda
(R_t) \geq\alpha^{-1}t$.
Choosing the largest $t$ satisfying this heuristic constraint is
equivalent to the step-up
procedure wherein $\beta(u)=u$. The choice of a different shape
function with $\beta(u)\leq u$ can be
interpreted roughly as a pessimistic discount
to compensate for various inaccuracies in the above heuristic argument
(in particular the fact
that the obtained threshold is really a random quantity).

In the case where $\cH$ is finite and $\Lambda$ is the counting
measure, it can be seen that the above definition recovers the usual
notion of
step-up procedures (see, e.g., Blanchard and Roquain~\cite{BR2008});
in particular, the
linear shape function
$\beta(u)=u$ gives rise to the celebrated linear step-up procedure of
Benjamini and Hochberg \cite{BH1995}.

\subsection{PRDS conditions}\label{secPRDS}

To ensure control of the FDR criterion, an important role is played by
structural assumptions on the dependence of the $p$-values. While
the case of independent $p$-values is considered as the reference
setting in the case where $\cH$ is finite, we recall that for an
uncountable set $\cH$, we cannot assume mutual independence of the
$p$-values since this would contradict our measurability assumptions
(see concluding discussion of Section~\ref{secdiscuss-mes}).

We will consider two different situations in our main result: first, if
the dependence of the $p$-values can be totally arbitrary, and
secondly, if a form of positive dependence is assumed. This is the
latter condition which we define more precisely now.
We consider a generalization to the case of infinite, possibly
uncountable space $\cH$, of the notion of positive regression
dependence on each one from a subset (PRDS) introduced by Benjamini and
Yekutieli \cite
{BY2001} in the case of a finite set of hypotheses.

For any finite set $\mathcal{I}$, a subset $D\subset[0,1]^\mathcal
{I}$ is called
\textit{non-decreasing} if for all $\mbf{z},\mbf{z}' \in
[0,1]^\mathcal{I}$
such that $\mbf{z} \leq\mbf{z}'$ (i.e., $\forall h\in\mathcal{I},
z_h\leq
z'_h$), we have $\mbf{z} \in D \Rightarrow\mbf{z}' \in D$.

%
\begin{definition}[(PRDS conditions for a finite
$p$-value family)]\label{defPRDSfinite}
Assume $\cH$ to be finite.
For $\cH'$ a
subset of $\cH$, the $p$-value family
$\mathbf{p}(X)=(p_h(X))_{h\in\cH}$ is said to be \textit{weak PRDS
on $\cH'$} for the distribution $P$,\vadjust{\goodbreak}
if for any $h\in\cH'$, for any measurable non-decreasing set $D$ in
$[0,1]^\cH$, the function
$u\in[0,1]\mapsto\Proba(\mbf{p}(X)\in D | p_h(X) \leq u)$
is non-decreasing on $\{u\in[0,1]\telque\P(p_h(X)\leq u)>0\}$; it is
said to be \textit{strong PRDS}
if the function $ u \mapsto\Proba(\mbf{p}(X)\in D | p_h(X)=u)$ is
non-decreasing.
\end{definition}
To be completely rigorous, observe that the conditional probability
with respect to the event $\{p_h(X) \leq u\}$ is defined pointwise
unequivocally whenever this event has positive probability, using
a ratio of probabilities; while the conditional probability
with respect to $p_h(X) = u$ can only be defined via conditional
expectation, and is therefore only defined up to a $p_h(X)$-negligible
set. Hence, in the definition of strong PRDS,
strictly speaking, we only require that the conditional probability
coincides $p_h(X)$-a.s. with a non-decreasing function.

%
\begin{definition}[(Finite dimensional PRDS
conditions for a $p$-value process)]\label{deffinitePRDS}
For $\cH'$ a subset of~$\cH$, the $p$-value process
$\mathbf{p}(X)=(p_h(X))_{h\in\cH}$ is said to be \textit{finite
dimensional weak PRDS on~$\cH'$} (resp., \textit{finite dimensional
strong PRDS on~$\cH'$}) for the distribution $P$,
if for any finite subset $\cS$ of $\cH$,
the finite $p$-value family $\bp_{\cS}(X)=(p_h(X))_{ h \in\cH\cap
\cS}$ is
weak PRDS on $\cH' \cap\cS$ (resp., strong PRDS on $\cH' \cap\cS$)
for the distribution $P$.
\end{definition}

While the finite dimensional weak PRDS property will be sufficient to
state our main result, the strong PRDS property is sometimes easier to handle.
Hence, it is important to note that the finite dimensional strong PRDS
property implies the finite dimensional weak PRDS property, as we
establish in Lemma~S-2.2, by using a standard
argument pertaining to classical multiple testing theory for a finite
set of hypotheses.

Finally, Benjamini and Yekutieli \cite{BY2001} (Section~3.1 therein)
proved that the $p$-value
family corresponding to a finite Gaussian random vector are (finite)
strong PRDS as soon as all the coefficient of the covariance matrix are
non-negative. This equivalently proves the following result.
%
%
\begin{lemma}\label{lem-PRDS-Gauss}
Let $\mathbf{p}(X)=(p_h(X))_{h\in\cH}$ be a $p$-value process of the
form $p_h(X)=G(X_h)$, $h\in\cH$, where $X=(X_h)_{h\in\cH}$ is a
Gaussian process and where $G$ is continuous decreasing from $\R$ to $[0,1]$.
Assume that the covariance function $\Sigma$ of $X$ satisfies
%
%
\begin{equation}
\label{equsigmapos} \forall h,h'\in\cH,\qquad \Sigma \bigl(h,h'
\bigr)\geq0.
\end{equation}
Then the $p$-value process is finite dimensional strong PRDS (on any subset).
\end{lemma}

\section{Control of the FDR}\label{mainresult}

In this section, our main result is stated and then illustrated with
several examples.

\subsection{Main result}

The following theorem establishes our main result on sufficient
conditions to
ensure FDR control at a specified level for step-up procedures.
It is proved in Section~\ref{secproofmainth}.
%
%
\begin{theorem}
\label{mainthm}
Assume that the hypothesis space $\cH$ satisfies \textup{\eqref{A1}} and is
endowed with a finite measure $\Lambda$. Let $\bp(X)=(p_h(X))_{h\in
\cH}$ be a $p$-value process satisfying the conditions \textup{\eqref
{hypmesomega}} and \textup{\eqref{equproppvalues}}. Denote $R$ the step-up
procedure on $(\cH,\Lambda)$
associated to a threshold function of the product form $\Delta
(h,r)=\alpha\pi(h)\beta(r)$, with $\alpha\in(0,1)$, $\beta$ a
non-decreasing right-continuous shape function and $\pi$ a probability
density function on $\cH$ with respect to $\Lambda$.
Then for any $P\in\cP$, letting $\Pi(\cH_0(P)) \bydef\int_{h \in
\cH_0(P)} \pi(h) \,\mathrm{d}\Lambda(h)$, the inequality
%
%
\begin{equation}
\FDR(R,P) \leq\al\Pi \bigl(\cH_0(P) \bigr) \qquad (\leq\al)
\label{equFDRcontrol}
\end{equation}
holds in either of the two following cases:
\begin{enumerate}
\item$\beta(x)=x$ and the $p$-value process $\mathbf{p}$ is finite
dimensional weak PRDS on $\cH_0(P)$ for the distribution $P$;
\item the function $\beta$ is of the form
%
%
\begin{equation}
\label{condbeta} \beta_\nu(x) = \int_0^x
u \,\mathrm{d}\nu(u) ,
\end{equation}
where $\nu$ is an arbitrary probability distribution on $(0,\infty)$.
\end{enumerate}
\end{theorem}

Since $\pi$ is taken as a probability density function on $\cH$ with
respect to $\Lambda$, the FDR in \eqref{equFDRcontrol} is upper
bounded by $\alpha\Pi(\cH_0)\leq\alpha\Pi(\cH) = \alpha$, so
that the corresponding step-up procedure provides FDR control at level~$\alpha$.
As an illustration, a typical choice for $\pi$ is the constant
probability density function $\forall h\in\cH, \pi(h)=1/\Lambda(\cH
)=M^{-1}$.

According to the standard philosophy of (multiple) testing, while the
FDR is controlled at level~$\alpha$ as in \eqref{equFDRcontrol}, we
aim to have a procedure that rejects a volume of hypotheses as large as
possible. In that sense, choosing a step-up procedure with $\beta
(x)=x$ always leads to a better step-up procedure than choosing $\beta
(x)$ of the form \eqref{condbeta}, because $\int_0^x u \,\mathrm
{d}\nu(u)\leq
x$. Hence, in Theorem~\ref{mainthm}, the PRDS assumption allows us to
get a result which is less conservative (i.e., rejecting more) than
under arbitrary dependencies.
Therefore, when we want to apply Theorem~\ref{mainthm}, an important
issue is to obtain,
if possible, the finite dimensional PRDS condition, see the examples of
Section~\ref{secappli}. When the PRDS assumption does not hold, we
refer to Blanchard and Roquain \cite{BR2008}
for an extended discussion on choices of the shape function $\beta$
of the form \eqref{condbeta} (which can be suitably adapted to the
uncountable case).

\subsection{Applications}\label{secappli}

\subsubsection{FDR control for testing the mean of a Gaussian process}
\label{gaussproc}

Consider the multiple testing setting of Example~\ref{exmeanprocess}.
More specifically, we consider here the particular case where
we observe $\{X_t,t\in[0,1]^d\}$ a Gaussian process with measurable
mean $\mu$, with unit variance and covariance function $\Sigma$.
Recall that the problem is to test for all $t\in[0,1]^d$ the
hypothesis $H_ t$: ``$\mu(t)\leq0$''. Taking for $\Lambda$ the
$d$-dimensional Lebesgue measure, the FDR control at level $\alpha$
of\vadjust{\goodbreak}
a step-up procedure of shape function $\beta$ and weight function $\pi
(h)=1$ can be rewritten as
%
%
\begin{equation}
\label{equFDRcontrolcontproc} \mbe\biggl[\frac{\Lambda(\{t \in[0,1]^d
\telque\mu (t)\leq0, \overline{\Phi}(X_t) \leq\alpha\beta(\hat
{r}(X))\} )} {\Lambda(\{t \in[0,1]^d \telque\overline{\Phi }(X_t)
\leq\alpha\beta(\hat{r}(X))\} )} \biggr] \leq\alpha,
\end{equation}
where $\overline{\Phi}$ is the upper-tail distribution function of a
standard Gaussian variable and 
\[
\hat{r}(X)=\max \bigl\{r \in[0,1] \telque\Lambda \bigl(\bigl\{t \in
[0,1]^d \telque\overline{\Phi}(X_t) \leq\alpha\beta(r)
\bigr\} \bigr)\geq r \bigr\} .
\]

Thus, Theorem~\ref{mainthm} and Lemma~\ref{lem-PRDS-Gauss} entail the
following result.

%
\begin{corollary}\label{corGaussudep}
For any jointly measurable Gaussian process $\{X_t\}_{t\in[0,1]^d}$
over $[0,1]^d$ with a measurable mean $\mu$ and unit variances,
the FDR control \eqref{equFDRcontrolcontproc} holds in either of the
two following cases:
\begin{itemize}
\item$\beta(x)=x$ and the covariance function of the process is
coordinates-wise non-negative, that is,
satisfies \eqref{equsigmapos};
\item$\beta$ is of the form \eqref{condbeta}, under no assumption on
the covariance function.
\end{itemize}
\end{corollary}

For instance, any Gaussian process with continuous paths is measurable
and thus can be used in Corollary~\ref{corGaussudep}.
More generally, Lemma~S-1.2 states that any Gaussian
process with a covariance function $\Sigma(t,t')$ such that
\[
\forall t \in[0,1]^d,\qquad \lim_{t'\rightarrow t}\Sigma \bigl(t',t
\bigr) = \Sigma(t,t) \quad\mbox{and}\quad \lim_{t'\rightarrow t}\Sigma \bigl(t',t'
\bigr) = \Sigma(t,t) =1
\]
has a measurable modification and hence can be used in Corollary~\ref
{corGaussudep}.

\subsubsection{FDR control for testing the signal in a Gaussian white
noise model}

We continue Example~\ref{exwhitenoise},
in which we observe the Gaussian process $X$ defined by $X_g = \int_0^1 g(t) f(t) \,\mathrm{d}t + \int_0^1 g(t) \,\mathrm{d}B_t$, $g\in
L^2([0,1])$, where $B$
is a Wiener process on $[0,1]$ and $f\in C([0,1])$ is a continuous
signal function.
Remember that we aim at testing $H_ t$: ``$f(t)\leq0$'' for any $t\in
[0,1]$, using the integration
of the process against a smoothing kernel $K_t$.
Assuming condition \eqref{eqregcondf} holds,
the $p$-value process is obtained via \eqref{equpvalueBBG} as
$p_t(X)=\overline{\Phi}^{-1}(Y_t)$, where $Y_t=v_{K,t}^{-1/2} (
X_{K_t} - \delta_{t,\eta} c_{K,t})$ is a Gaussian process.
Applying Lemma~\ref{lem-PRDS-Gauss}, we can prove that the $p$-value
process defined by \eqref{equpvalueBBG} is finite dimensional strong
PRDS (on any subset) by checking that the covariance function of
$(Y_t)_t$ has non-negative values: the latter holds because the kernel~$K$
has been taken non-negative and $\forall t,s$, $\Cov(Y_t,Y_{s})=c
\int_0^1K((t-u)/\eta)K((s-u)/\eta) \,\mathrm{d}u$, for a
non-negative constant
$c$. As a consequence, Theorem~\ref{mainthm} shows that a step-up
procedure using $\beta(x)=x$ controls the FDR.\looseness=-1

To illustrate this result, let us consider a simple particular case
where the kernel $K$ is rectangular, that is, $K(s)={\mbf{1}\{|s|\leq
1\}}/2$
and $f$ is $L$-Lipschitz. Also, to avoid the boundary effects due to
the kernel smoothing, we assume that the observation $X$ is made
against functions of $L^2([-1,2])$ while the test of $H_ t$:
``$f(t)\leq0$'' has only to be performed for $t\in[0,1]$ only.
In that case, for $t\in[0,1]$, $\delta_{t,\eta}=L\eta$,
$c_{K,t}=\eta$, $v_{K,t}=\eta/2$,
so that $Y_t=(2\eta)^{-1/2}(Z_{t+\eta}-Z_{t-\eta} -L\eta^2)$.
Therefore, the following statement holds.

%
\begin{corollary}\label{corBBG}
Let us consider the Gaussian process $Z_t = \int_{-1}^t f(s) \,\mathrm
{d}s +
B_t$, $t\in[-1,2]$, where~$B$ is a Wiener process on $[-1,2]$ and $f$
is a $L$-Lipschitz function on $[-1,2]$ ($L>0$). Let $\eta\in(0,1]$
and $Y_t=(2\eta)^{-1/2}(Z_{t+\eta}-Z_{t-\eta} -L\eta^2)$. Denote
the Lebesgue measure on $[0,1]$ by $\Lambda$. Consider the volume
rejection of the step-up procedure using $\pi(t)=1$ and $\beta(x)=x$,
that is,
\[
\hat{r}(X)=\max \bigl\{r \in[0,1] \telque\Lambda \bigl(\bigl\{t \in [0,1]
\telque \overline{\Phi}(Y_t) \leq\alpha r\bigr\} \bigr)\geq r \bigr\}
,
\]
where $\overline{\Phi}$ denotes the upper-tail distribution function
of a standard Gaussian variable. Then the following FDR control holds:
%
%
\begin{equation}
\label{equFDRcontrolcontproc-BBG} \mbe\biggl[\frac{\Lambda(\{t \in[0,1]
\telque f(t)\leq 0, \overline{\Phi}(Y_t) \leq\alpha\hat{r}(X)\}
)} {\Lambda (\{t \in[0,1] \telque\overline{\Phi}(Y_t) \leq\alpha
\hat{r}(X)\} )}\biggr] \leq
\alpha.
\end{equation}
\end{corollary}

\subsubsection{FDR control for testing the c.d.f.}

Consider the testing setting of Example~\ref{excdf} where we aim at
testing whether ``$F(t)\leq F_0(t)$'' for any $t$ in an interval
$I\subset\R$.
Lemma~\ref{PRDS-cdf} states that the $p$-value process defined by
\eqref{pvaluefit} is finite dimensional weak PRDS (on any subset).
As a consequence, Theorem~\ref{mainthm} applies and leads to a control
of the FDR.

For instance, let us consider the simple case where $I=[0,1]$,
$F_0(t)=t$ and $\Lambda$ is the Lebesgue measure on $[0,1]$.
In this case,
for any $k\in\{1,\ldots,m\}$, the function $G_t(k)=\P(Z_t\geq k)$,
with $Z_t\sim\mathcal{B}(m,t)$, is continuous increasing in the
variable $t\in[0,1]$.
Moreover, for any $t\in(0,1)$, the function $G_t(k)$ is decreasing in
$k=0,\ldots,m$.
Therefore, denoting $0=X_{(0)}\leq X_{(1)}\leq\cdots\leq X_{(m)}\leq
X_{(m+1)}=1$ the order statistics of $X_1,\ldots,X_m$, the $p$-value
process $t\mapsto p_t(X)=G_t(|\{1\leq i \leq m\telque X_i\leq t\}|)$ is
equal to $1$ on $[0,X_{(1)})$, is increasing on each interval
$(X_{(j)},X_{(j+1)}]$, $j=1,\ldots,m$, and is left-discontinuous and
right-continuous in each $X_{(j)}$, $1\leq j\leq m$, with a left limit
larger than $p_{X_{(j)}}(X)=G_{X_{(j)}}(j)$ (see Figure~\ref{figtestcdf}).

As a consequence, for any threshold $u\in(0,1)$, we obtain the
following relation for the Lebesgue measure $\gamma(u)$ of the level
set $\{t\in[0,1]\telque p_t(X)\leq u \}$:
%
%
\begin{eqnarray}\label{equ-cdf-nu}
\gamma(u)&=& \sum_{j=0}^m {\mbf{1}\bigl
\{G_{X_{(j)}}(j)\leq u\bigr\}} \Lambda \bigl( \bigl\{t \geq X_{(j)}
\telque G_t(j)\leq u \mbox{ and } t<X_{(j+1)} \bigr\} \bigr),
\nonumber
\\[-8pt]
\\[-8pt]
\nonumber
& =& \sum_{j=0}^m
\bigl(X_{(j+1)}\wedge t_j(u) - X_{(j)} \bigr)_+,
\end{eqnarray}
where $t_j(u), j=0,\ldots,m$ is the unique solution of the equation $G_t(j)=u$,
which can be easily computed numerically. Choosing for simplicity a uniform
weighting $\pi(x)\equiv1$, the choice of the rejection threshold
given by the linear step-up procedure is then $\wh{u}=\alpha\wh{r}$,
where $\wh{r}$ is the largest solution of the equation $\gamma(\alpha r)=r$.
To sum up, we have shown the following result.

%
\begin{corollary}\label{cortestcdf}
Let $X=(X_1,\ldots,X_m)$ be a vector of $m$ i.i.d. real random
variables of common continuous c.d.f. $F$.
Consider $(p_t(X))_{t\in[0,1]}$ the $p$-value process $p_t(X)=G_t(|\{
1\leq i \leq m\telque X_i\leq t\}|)$ for $G_t(k)=\P(Z_t\geq k)$,
where\vadjust{\goodbreak}
$Z_t$ is a binomial variable of parameters $(m,t)$. Assume that the
hypothesis space $[0,1]$ is endowed with the Lebesgue measure $\Lambda
$. Consider the volume rejection of the step-up procedure given by
%
%
\begin{equation}
\label{equrchapfit} \wh{r}(X)=\max\bigl\{r \in[0,1] \telque\gamma (\alpha r)
\geq r\bigr\},
\end{equation}
where $\gamma(\cdot)$ is defined by \eqref{equ-cdf-nu}. Then the
following FDR control holds:
%
%
\begin{equation}
\label{equFDRcontrolfit} \mbe\biggl[\frac{\Lambda(\{t\in[0,1] \telque
F(t)\leq t, p_t(X) \leq\alpha\hat{r}(X)\} )} {\Lambda(\{t \in
[0,1] \telque p_t(X) \leq\alpha\hat{r}(X)\} )} \biggr] \leq\alpha.
\end{equation}
\end{corollary}

\subsubsection{FDR control for testing the intensity of a Poisson
process}\label{secpois-ex}

Let us consider the testing setting of Example~\ref{exPoisson}.
Lemma~\ref{PRDS-pois} states that the $p$-values process is finite
dimensional strong PRDS (on any subset). Thus, it is also finite
dimensional weak PRDS (on any subset) by Lemma~S-2.2, and
Theorem~\ref{mainthm} leads to a control of the FDR.

%
\begin{figure}[t]
\vspace*{-3pt}
\includegraphics[scale=0.985]{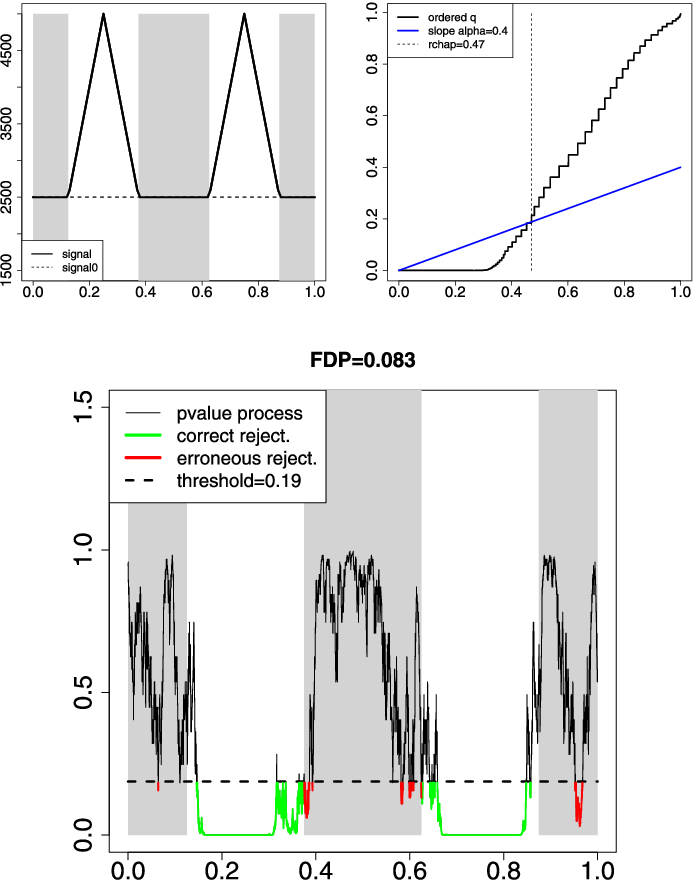}\vspace*{-3pt}

\caption{Several plots versus $t\in[0,1]$. Top left: $\lambda(t)$
(solid) and $\lambda_0$ (dashed).
Top right: $q_{\sigma(k)}(X)$ and $\alpha\sum_{\l=0}^{k} w_{\l} $
in function of $\sum_{\l=0}^{k} w_{\l}$,
for $k=1,\ldots,m$. Bottom: $p$-value process $p_t(X)$ defined by
\protect\eqref{pvalueprocesssimple-pois}. $\eta=0.015$, $\alpha=0.4$.
The grey areas indicate regions where the null hypotheses are true.}
\label{figtestpois}\vspace*{-4pt}
\end{figure}

Now, we aim at finding a closed formula for the linear
step-up procedure (\mbox{$\beta(x)=x$}) using the $p$-value process $(p_t(X))_t$.
Let us consider the particular case where the benchmark intensity
$\lambda_0(\cdot)$ is constantly equal to some $\lambda_0>0$ while
$\lambda(\cdot)$ is $L$-Lipschitz.
Also, to avoid the boundary effects, assume that the process
$(N_t)_{t}$ is observed for $t\in[-1,2]$ while $H_ t$: ``$\lambda
(t)\leq\lambda_0$'' is tested only for $t\in[0,1]$. In this case, the
$p$-value process is simply given by
%
%
\begin{equation}
\label{pvalueprocesssimple-pois} p_t(X)=G (N_{t+\eta}-N_{t-\eta}
),
\end{equation}
where for any $k\in\mathbb{N}$, $G(k)$ denotes $\mbp[Z\geq k]$ for
$Z$ a Poisson distribution of parameter
$2\eta\lambda_0+L \eta^2$
(note that $G(\cdot)$ is independent of $t$).
Consider the jumps $\{T_j\}_j$ of the process $(N_t)_{t\in[-1,2]}$ and
the set $\cS=\{s_i\}_{2\leq i\leq m}$ of the distinct and ordered
values of the set
$\bigcup_j\{T_j-\eta,T_j+\eta\}\cap(0,1)$. Moreover, we let $s_1=0$
and $s_{m+1}=1$.
Next, since the $p$-value process is constant on each interval
$[s_i,s_{i+1})$, $1\leq i\leq m$, we have for any $u\geq0$,
\begin{eqnarray*}
\Lambda \bigl( \bigl\{t\in[0,1]\telque p_t(X)\leq u \bigr\} \bigr)&=&
\sum_{i=1}^{m} (s_{i+1}-s_i)
{\mbf{1}\bigl\{p_{s_i}(X)\leq u\bigr\}}
\\
&=&\sum_{k=1}^{m} w_{k} {\mbf{1}
\bigl\{q_{\sigma(k)}(X)\leq u\bigr\}},
\end{eqnarray*}
where we let $q_{i}(X)=p_{s_i}(X)$, where $\sigma$ is a permutation of
$\{1,\ldots,m\}$ such that $q_{\sigma(1)}\leq\cdots\leq q_{\sigma
(m)}$ and where $w_{k}=s_{\sigma(k)+1}-s_{\sigma(k)}>0$ can be
interpreted as a ``weighting'' associated to $q_{\sigma(k)}$.
As a consequence, we get
%
%
\begin{eqnarray}
\label{rchap-pois} \wh{r}(X)&=&\max\Biggl\{r \in[0,1] \telque\sum
_{\l
=0}^{m} w_{\l} {\mbf{1}\bigl
\{q_{\sigma(\l)}(X) \leq\alpha r\bigr\}} \geq r\Biggr\}
\nonumber
\\[-8pt]
\\[-8pt]
\nonumber
&=&\max\Biggl\{\sum_{\l=0}^{k}
w_{\l}, \mbox{ for }k\in\{0,\ldots,m\} \mbox{ s.t. }
q_{\sigma(k)}(X)\leq\alpha\sum_{\l=0}^{k}
w_{\l}\Biggr\},
\end{eqnarray}
because since $\wh{r}(X)$ is a maximum, it is of the form $\sum_{\l
=0}^{k} w_{\l}$, $k\in\{0,\ldots,m\}$.
Note that we should let $q_{\sigma(0)}=0$ and $w_0=0$ to cover the
case $\wh{r}(X)=0$.
Relation \eqref{rchap-pois} only involves a finite number of
variables. Thus, $\wh{r}(X)$ can be easily computed in practice.
This is illustrated in Figure~\ref{figtestpois}.

We have proved the following result.

%
\begin{corollary}\label{cortest-pois}
Let $X=(N_t)_{t\in[-1,2]}$ be a Poisson process with an intensity
$\lambda\dvtx [-1,2]\rightarrow\R^+$ $L$-Lipschitz ($L>0$) and let
$\lambda_0>0$. For $\eta\in(0,1]$, consider the $p$-value process $\{
p_t(X)\}_{t\in[0,1]}$ given by \eqref{pvalueprocesssimple-pois}.
Assume that the hypothesis space $[0,1]$ is endowed with the Lebesgue
measure $\Lambda$.
Then $\wh{r}(X)$ defined by \eqref{rchap-pois} satisfies the following:
%
%
\begin{equation}
\label{equFDRcontrolpois} \mbe\biggl[\frac{\Lambda(\{t\in[0,1] \telque
\lambda (t)\leq\lambda_0, p_t(X) \leq\alpha\hat{r}(X)\} )} {
\Lambda(\{t \in[0,1] \telque p_t(X) \leq\alpha\hat {r}(X)\} )} \biggr] \leq\alpha.
\end{equation}
\end{corollary}

To illustrate Corollary~\ref{cortest-pois}, Figure~\ref{figtestpois}
displays the case where $\lambda(t)$ is a truncated triangular signal.
The choice of the bandwidth $\eta$ has been made manually, see
Section~\ref{secdiscusseta} for a discussion on this point.

%
\begin{remark} \label{rempois}
Up to increase the set $\cS=\{s_i\}_{ i}$ so that $t\mapsto{\mbf{1}\{
\lambda(t)\leq\lambda_0\}}$ is constant over each $[s_i,s_{i+1})$,
the FDR control \eqref{equFDRcontrolpois} can be rewritten as
%
%
\begin{equation}
\label{equFDRcontrolpoisdiscrete} \mbe\biggl[\frac{ \sum_{i=1}^m
(s_{i+1}-s_i){\mbf{1}\{\lambda(s_i)\leq \lambda_0\}} {\mbf{1}\{
p_{s_i}(X)\leq\al \hat{r}(X)\}} } { \sum_{i=1}^m (s_{i+1}-s_i)
{\mbf{1}\{p_{s_i}(X)\leq\al \hat{r}(X)\}} } \biggr] \leq
\alpha.
\end{equation}
Hence, the procedure \eqref{rchap-pois} appears as controlling the
discrete FDR-weighting on $\{1,\ldots,m\}$ where the weight for
rejecting ``$\lambda(s_i)\leq\lambda_0$'' is $(s_{i+1}-s_i)$ and
where the initial $p$-values are $q_i(X)=p_{s_i}(X)$.
The rationale behind this is that if $q_i(X)=p_{s_i}(X)$ is below $\hat
{r}(X)$, then so are all $p_t(X)$, $t\in[s_i,s_{i+1})$. Hence, a
rejection for a $p$-value $q_i(X)=p_{s_i}(X)$ accounts for the length
of the entire interval in the FDR. From an intuitive point of view,
this means that the type I error importance in the FDR is larger for
``isolated'' points of the process.
This bears some similarity with discrete multiple testing with
weights,
Benjamini and Hochberg \cite{BH1997} and Blanchard and Roquain~\cite
{BR2008}, but those results would not apply
here since the weights
themselves are data-dependent.
\end{remark}

\section{\texorpdfstring{Proof of Theorem~\protect\ref{mainthm}}{Proof of Theorem 4.1}}\label{secproofmainth}

\subsection{Two conditions for controlling the FDR }\label{secFDRapproach}

Similarly to Proposition~2.7 of Blanchard and Roquain \cite{BR2008}
(which we refer
to as BR08 for short from now on), we can prove that the
FDR control
$\FDR(R,P) \leq\al\Pi(\cH_0(P)) $
holds true for any $P\in\cP$ as soon as the two following
sufficient conditions hold for any $P \in\cP$:
\begin{itemize}
\item the multiple testing procedure $R$ satisfies the
``self-consistency condition''
%
%
{\renewcommand{\theequation}{$\mathrm{SC}(\alpha,\pi,\beta)$}
\begin{equation}
\label{SCC} R(x) \subset \bigl\{ h \in\cH\telque p_h(x) \leq\alpha
\pi(h) \beta \bigl( \Lambda \bigl(R(x) \bigr) \bigr) \bigr\} \qquad\mbox{for } P
\mbox{-almost all } x \in\cX%
\end{equation}}
\item for any $h \in\cH_0(P)$ the couple of
real random variables $(U_h,V)\bydef(p_h(X),\Lambda(R(X)))$ satisfies
the ``dependence control condition''
%
%
{\renewcommand{\theequation}{$\mathrm{DC}(\beta)$}
\begin{equation}
\label{DCC} \forall c>0,\qquad \mbe\biggl[\frac{{\mbf{1}\{U_h \leq c \beta(V)\}
}}{V} {\mbf{1}\{V>0\}}\biggr]
\leq c . 
\end{equation}}
\end{itemize}

The proof is as follows: by definition and by using Fubini's theorem,
we have
\begin{eqnarray*}
\FDR(R,P)&= &\mbe\biggl[\frac{\Lambda(R\cap\cH_0)}{\Lambda(R)} {\mbf {1}\bigl\{\Lambda(R)>0\bigr\}}
\biggr]
\\
&=& \mbe\biggl[\int_{h\in\cH_0} \frac{{\mbf{1}\{h\in R\}}}{\Lambda(R)} {\mbf{1}\bigl\{
\Lambda(R)>0\bigr\}}\,\mathrm{d}\Lambda(h)\biggr]
\\
&=& \int_{h\in\cH_0} \mbe\biggl[\frac{{\mbf{1}\{h\in R\}}}{\Lambda
(R)}{\mbf{1}\bigl\{
\Lambda(R)>0\bigr\}}\biggr] \,\mathrm{d}\Lambda(h)
\\
&\leq&\int_{h\in\cH_0} \mbe\biggl[\frac{{\mbf{1}\{p_h\leq\alpha\pi
(h)\beta(\Lambda(R))\}}}{\Lambda(R)}{\mbf{1}\bigl\{
\Lambda(R)>0\bigr\}}\biggr] \, \mathrm{d}\Lambda(h)
\\
&\leq&\alpha\int_{h\in\cH_0} \pi(h) \,\mathrm{d}\Lambda(h),
\end{eqnarray*}
where we have used the shortened notation $R$ for $R(X)$ and $p_h$ for
$p_h(X)$, and used successively conditions \eqref{SCC} and \eqref{DCC}
for the two above inequalities. Observe that the use of Fubini's
theorem is granted by the measurability assumption of
Definition~\ref{defMTP}.

Therefore, to obtain the FDR bound of Theorem~\ref{mainthm} in each
case, we simply have to check conditions
\eqref{SCC} and \eqref{DCC} in the different settings.

\subsection{\texorpdfstring{Any step-up procedure satisfies \protect\eqref{SCC}}
{Any step-up procedure satisfies (SC(alpha, pi, beta))}}
\label{secstep-upproof}

From the definition of a step-up procedure, for all $\varepsilon>0$,
we have $\Lambda(L_\Delta(\wh{r}))\leq\Lambda(L_\Delta(\wh
{r}+\varepsilon)) < \wh{r}+\varepsilon$. This entails that $\wh{r}$
satisfies $\Lambda(L_\Delta(\wh{r})) =\wh{r}$. Hence, the step-up
procedure $R$ satisfies \ref{SCC} with equality.

We now check that any step-up procedure is a multiple testing
procedure, that is, that $(\omega,h) \mapsto{\mbf{1}\{h \in
R(X(\omega ))\}}={\mbf{1}\{p_h(X(\omega))\leq\al\pi(h) \beta(\wh
{r}(X(\omega )))\}}$ is (jointly) measurable. From \textup{\eqref
{hypmesomega}} and since
$\beta$ and $\pi$ are measurable, it is enough to check that $\omega
\mapsto\wh{r}(X(\omega))$ is measurable. For any $x\in X(\Omega)$,
let us consider the function
%
%
\[
f\dvtx r\in\mathbb{R}^+\mapsto\Lambda \bigl(L_\Delta(x,r)
\bigr)= \int_{\cH} {\mbf{1}\bigl\{p_h(x)\leq\al
\pi(h)\beta(r)\bigr\}}\,\mathrm{d} \Lambda(h).
\]
We observe that $f$ is right-continuous and non-decreasing (because
$\beta$ is) and bounded, and that $\wh{r}=\max\{r \geq0 \telque
f(r) \geq r\}$.
Applying Lemma~S-2.3, we deduce that for\break  any~$x\in X(\Omega)$,
%
%
\setcounter{equation}{22}
\begin{eqnarray}
\label{decompSU} \wh{r}(x) &=& \inf_{\varepsilon>0, \varepsilon\in
\mathbb{Q}} \sup \bigl\{r\in\mathbb{Q}^+
\telque\Lambda \bigl(L_\Delta(x,r) \bigr)\geq r- \varepsilon \bigr\}
\nonumber
\\[-8pt]
\\[-8pt]
\nonumber
&=& \inf_{\varepsilon>0, \varepsilon\in\mathbb{Q}} \sup_{r\in
\mathbb{Q}^+} \bigl(r {\mbf{1}\bigl\{\Lambda
\bigl(L_\Delta(x,r) \bigr)\geq r- \varepsilon\bigr\}} \bigr).
\end{eqnarray}
Since from \textup{\eqref{hypmesomega}}, for all $\varepsilon>0$,
$\varepsilon\in\mathbb{Q}$ and $r \in\mathbb{Q}^+ $, the function
\begin{eqnarray*}
\omega\mapsto& r {\mbf{1}\bigl\{\Lambda \bigl(L_\Delta \bigl(X(\omega
),r \bigr) \bigr)\geq r- \varepsilon\bigr\}} = r {\mbf{1}\bigl\{\Lambda \bigl(
\bigl\{h\in\cH\telque p_h \bigl(X(\omega) \bigr)\leq\alpha\pi (h)
\beta(r) \bigr\} \bigr)\geq r- \varepsilon\bigr\}}
\end{eqnarray*}
is measurable, expression \eqref{decompSU} implies that $\omega
\mapsto\wh{r}(X(\omega))$ is measurable. Hence, a step-up procedure
satisfies the measurability requirements of Definition~\ref{defMTP}.

\subsection{\texorpdfstring{Conditions implying \protect\eqref{DCC}}
{Conditions implying (DC(beta))}}\label{secdcc}

We use the following lemma which was proved in (BR08) (see Lemma~3.2, items
(ii, iii) therein):

%
\begin{lemma}
\label{UVlemma}
Let $(U,V)$ be a couple of non-negative random variables such that $U$
is stochastically
lower bounded by a uniform variable on $[0,1]$, that is, $\forall t\in
[0,1], \P(U\leq t)\leq t$. Then the dependence control condition
\ref{DCC} is satisfied by $(U,V)$ in either one of the following situations:

\begin{longlist}[(ii)]
\item[(i)] $\beta(x)=x$ and
%
%
\begin{equation}
\label{UV-nondec} \forall r\in\mathbb{R}^+,\qquad u\mapsto\Proba(V < r |
U \leq u)\qquad
\mbox{is non-decreasing on } \bigl\{u\telque P(U\leq u)>0 \bigr\} .
\end{equation}

\item[(ii)] The shape function $\beta$ is of the form
\eqref{condbeta}.
\end{longlist}
\end{lemma}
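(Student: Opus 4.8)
The plan is to prove Lemma~\ref{UVlemma} by reducing it to known one-dimensional facts about the pair $(U,V)$, treating each case separately. In both cases I would write the expectation inside \ref{DCC} as an integral against the law of $U$ after conditioning, so that the whole argument becomes a statement about the conditional distribution function $u\mapsto \P(V<r\mid U\le u)$ (or its analogue for general $\beta$).

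\textit{Case (i), $\beta(x)=x$.} Here I would start from
\[
\E\!\left[\frac{\ind{U\le c V}}{V}\,\ind{V>0}\right]
=\E\!\left[\ind{V>0}\,\frac{1}{V}\,\ind{U\le cV}\right],
\]
and rewrite $1/V$ using the layer-cake / Fubini identity $\tfrac1V\ind{V>0}=\int_0^\infty \ind{V\le 1/s}\,ds$ — or, more conveniently, $\tfrac1V\ind{V>0}=\int_0^{\infty}\ind{V< r}\,r^{-2}\,dr$ is false; rather I would use $\ind{V>0}/V = \int_0^\infty \ind{0<V\le t}\,t^{-2}\,dt$. Cleaner: since for fixed $V>0$ we have $V^{-1}=\int_0^\infty \ind{r<V}\,\cdot$ — I will instead follow the standard BR08 route, conditioning on $U$. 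Writing $g(u)=\P(V<u/c\mid U\le u)$ is not quite the right object; the key computation is
\[
\E\!\left[\frac{\ind{U\le cV}}{V}\ind{V>0}\right]
=\int_{(0,\infty)} \frac{1}{v}\,\P\!\big(U\le cv\big)\,d\mu_V^{*}(v),
\]
where $\mu_V^{*}$ is a suitable (sub)measure obtained after using the monotonicity hypothesis \eqref{UV-nondec}: the content of \eqref{UV-nondec} is exactly that $v\mapsto \P(U\le cv)$ ``dominates in the right direction'' so that one may bound $\P(U\le cv\mid V\in dv)$ by the unconditional quantity $\P(U\le cv)\le cv$ in an averaged sense. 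Then each term contributes at most $\tfrac1v\cdot cv=c$ and, because the relevant measure is a probability (the mass of $V$ on $(0,\infty)$ is $\le 1$), the integral is $\le c$. The honest way to make the domination step rigorous is: \eqref{UV-nondec} says $u\mapsto\P(V<r\mid U\le u)$ is nondecreasing, equivalently $\P(V<r, U\le u)/\P(U\le u)$ is nondecreasing in $u$; integrating by parts in $u$ against $\ind{u\le cv}$ transfers the monotonicity into the desired inequality. I would quote this verbatim from BR08 Lemma~3.2(ii) since the statement explicitly allows it, but spell out enough of the Fubini/integration-by-parts bookkeeping to be self-contained.

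\textit{Case (ii), $\beta=\beta_\nu$ with $\beta_\nu(x)=\int_0^x u\,d\nu(u)$.} Here no dependence assumption is used, and the trick is the classical one: by Fubini,
\[
\frac{\ind{U\le c\,\beta_\nu(V)}}{V}\ind{V>0}
=\frac{\ind{V>0}}{V}\,\ind{U\le c\int_0^V u\,d\nu(u)}\le \frac{\ind{V>0}}{V}\int_0^V \ind{U\le c v'\,?}\ldots
\]
— more precisely one writes $\ind{U\le c\beta_\nu(V)}$ and uses that $\beta_\nu(V)=\int \ind{u\le V}\,u\,d\nu(u)$, so that
\[
\frac{\beta_\nu(V)}{V}\ind{V>0}=\int_{(0,\infty)}\frac{u\wedge V}{V}\,\ind{?}\,d\nu(u)
\]
is not quite it either; the actual BR08 argument bounds $\ind{U\le c\beta_\nu(V)}/V$ above by $c\int_{(0,\infty)}\ind{U\le cu}\,u^{-1}\,d\nu(u)\cdot(\text{something}\le1)$ and then takes expectations, using $\P(U\le cu)\le cu$ to get $\E[\ind{U\le cu}/u]\le c$ for each fixed $u$, and finally $\int d\nu=1$. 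I would reproduce this computation in two or three lines.

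The main obstacle, and the only genuinely delicate point, is the measure-theoretic handling of the conditional probability in \eqref{UV-nondec} when $\P(U\le u)$ can vanish on an initial interval and when $V$ may have an atom at $0$: one must be careful that $u\mapsto\P(V<r\mid U\le u)$ is only asserted nondecreasing on $\{u:\P(U\le u)>0\}$, and that the layer-cake representation of $1/V$ interacts correctly with $\ind{V>0}$. Everything else is a direct application of Fubini's theorem (whose use here is elementary, all functions being nonnegative) plus the two cited items of BR08 Lemma~3.2; accordingly I would keep the write-up short, citing (BR08) for the core inequality in case (i) and giving the self-contained Fubini argument for case (ii).
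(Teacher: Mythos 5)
First, a point of context: the paper itself gives no proof of this lemma --- it is imported wholesale from BR08 (Lemma~3.2, items (ii)--(iii) there) --- so your decision to defer to that citation is formally consistent with what the authors do. The difficulty is that both of your attempted reconstructions of the BR08 arguments are incorrect, and in case (ii) the inequality you propose to ``reproduce in two or three lines'' is false. You bound $\ind{U\le c\beta_\nu(V)}V^{-1}\ind{V>0}$ by (a constant times) $\int_{(0,\infty)}\ind{U\le cu}\,u^{-1}\,d\nu(u)$ and then integrate $u$ against $\nu$. Without the prefactor $c$ this pointwise domination fails: take $\nu=\tfrac12\delta_1+\tfrac12\delta_T$ with $T>1$, so $\beta_\nu(1)=\tfrac12$; on the event $\{V=1,\;U\le c/2\}$ the left side equals $1$ while the right side equals $\tfrac12+\tfrac1{2T}<1$, and the failure survives expectations (take $V\equiv1$, $c=2$, $U\equiv1$). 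With the prefactor $c$, your subsequent steps ($\E[\ind{U\le cu}/u]\le c$ for each $u$, then $\int d\nu=1$) yield $c^2$, not $c$. The mechanism that actually works is a layer-cake representation of $1/V$, not of $\beta_\nu$: write $\ind{V>0}/V=\int_0^\infty v^{-2}\ind{0<V\le v}\,dv$, use monotonicity of $\beta_\nu$ to get $\ind{U\le c\beta_\nu(V)}\ind{0<V\le v}\le\ind{U\le c\beta_\nu(v)}$, take expectations and apply $\P(U\le c\beta_\nu(v))\le c\beta_\nu(v)$, and conclude with the Fubini identity $\int_0^\infty\beta_\nu(v)v^{-2}\,dv=\int_{(0,\infty)}u\cdot u^{-1}\,d\nu(u)=1$. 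This is where the total mass of $\nu$ enters, and it is exactly why the class \eqref{condbeta} is the right one.

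For case (i), your sketch never actually states the step in which \eqref{UV-nondec} is used: the measure $\mu_V^{*}$ and the ``domination in an averaged sense'' are left undefined, and ``integrating by parts in $u$ against $\ind{u\le cv}$'' does not pin down the argument. A working proof (for $V$ supported on finitely many values $0<v_1<\dots<v_K$, the general case following by approximation) writes $\E[\ind{U\le cV}V^{-1}\ind{V>0}]=\sum_k v_k^{-1}\bigl[\P(U\le cv_k,V\ge v_k)-\P(U\le cv_k,V\ge v_{k+1})\bigr]$, uses \eqref{UV-nondec} with $r=v_{k+1}$ to show that $c_k:=\P(V\ge v_k\mid U\le cv_k)$ is nonincreasing in $k$ and that $\P(U\le cv_k,V\ge v_{k+1})\ge\P(U\le cv_k)\,c_{k+1}$, and then bounds the resulting sum by $\sum_k v_k^{-1}\P(U\le cv_k)(c_k-c_{k+1})\le c\sum_k(c_k-c_{k+1})\le c$, the last step using $\P(U\le cv_k)\le cv_k$ and the telescoping. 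This Abel-summation step is the entire content of the lemma in case (i); as written, neither half of your proposal would compile into a proof without going back to BR08 for the actual computations.
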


The point (ii) above, together with the results of the two previous sections,
establishes point 2 of Theorem~\ref{mainthm}. To establish point 1 and
finish the proof, we have to prove that \eqref{UV-nondec} holds in the
finite dimensional weak PRDS dependence context, which is done in the
following proposition:

%
\begin{proposition}\label{corWeakPRDSimpliesfdrcontrol}
Assume that the $p$-values process $\mathbf{p}=(p_h,h\in\cH)$ is
finite dimensional weak PRDS on $\cH_0(P)$ for any $P\in\cP$.
Consider $R$ the step-up procedure defined by Definition~\ref
{defcontstepup} with $\beta(x)=x$. Then for any $P\in\cP$,
for any $h\in\cH_0(P)$, the couple of variables $(U_h,V)=(p_h,\Lambda
(R))$ satisfies \eqref{UV-nondec} and thus
\ref{DCC} holds for $\beta(x)=x$.
\end{proposition}

\begin{pf}
In the above statement and the present proof, we use the shortened
notation $R,p_h$, and $L_\Delta(r)$ for
the random quantities $R(X),p_h(X)$, and $L_\Delta(X,r)$, respectively.
The goal of the proof is to establish \eqref{UV-nondec}, that is for
any $h_0 \in\cH_0$ ($h_0$ is assumed to
be fixed in $\cH_0$ in the rest of the proof), for any
$t$, and $0\leq u\leq u'$ with $\P(p_{h_0}\leq u)>0$:
\[
\mbp\bigl[\Lambda(R) < t | p_{h_0} \leq u\bigr]
\leq\mbp\bigl[\Lambda(R) < t | p_{h_0} \leq u'
\bigr] ;
\]
From Definition~\ref{defcontstepup}, the real random variable
$\Lambda(R)$ can be rewritten as $\Lambda(R)=\wh{r}=\max\{r\telque
f(r)\geq r\}$ with $f\dvtx r \mapsto\Lambda(L_\Delta(r))$.
Furthermore, denoting $G_u =
\frac{{\mbf{1}\{p_{h_0} \leq u\}}}{\mbp[p_{h_0} \leq u]} $,
we are equivalently aiming at proving that for any $t$ and $0 \leq u
\leq u'$ with $\P(p_{h_0}\leq u)>0$:
%
%
\begin{equation}
\label{dagoal} \mbe\bigl[{\mbf{1}\{\wh{r} < t\}} G_u\bigr] \leq\mbe
\bigl[{\mbf {1}\{\wh{r} < t \}} G_{u'}\bigr] .
\end{equation}
By using Lemma~\ref{approxcontstepup} (and the notation therein),
there exists a fixed sequence of finitely supported measures $\Lambda_n$ on $\cH$ such that, denoting
$
\wh{r}_{n,k}=\max\{r\geq0\telque\Lambda_n(L_\Delta(r))\geq
r-k^{-1} \},
$
it holds that
%
%
\begin{equation}
\label{equapprox} \wh{r}=\lim_{k\rightarrow\infty} \wh{r}^+_k =
\lim_{k\rightarrow
\infty} \wh{r}^-_k\qquad \mbox{almost surely},
\end{equation}
where we let $\wh{r}^+_k=\limsup_{n\rightarrow\infty} \wh
{r}_{n,k}$ and $\wh{r}^-_k=\liminf_{n\rightarrow\infty} \wh{r}_{n,k}$.

Let $\cS_n$ be the (finite) support of $\Lambda_n$ and $\cS'_n = \cS_n \cup\{h_0\}$.
Writing $\wh{r}_{n,k}$ as a function of the
finite $p$-value set $\{p_{h}, h\in\cS'_n\}$, the function $\wh
{r}_{n,k} \dvt\mbf{z}=(z_h)_{h\in\cS'_n} \in[0,1]^{\cS'_n}
\mapsto
\wh{r}_{n,k}(\mbf{z}) $ is measurable (where the space $[0,1]^{\cS
'_n}$ is endowed with the standard product Borel $\sigma$-field), and
is additionally non-increasing in each $p$-value.
Hence, the set $\{\mbf{z}=(z_h)_{h\in\cS'_n} \telque\wh
{r}_{n,k}(\mbf{z}) <t+k^{-1}\}$ is a non-decreasing measurable subset
of $[0,1]^{\cS'_n}$.
Using that the $p$-value process $\mbf{p}=(p_h, h\in\cH)$ is finite
dimensional weak PRDS on $\cH_0$, the $p$-values $(p_h, h\in\cS'_n)$
are PRDS on $\cH_0\cap\cS'_n$, which implies that
for any $t\geq0$ and $u \leq u'$ with $\P(p_{h_0}\leq u)>0$,
%
%
\begin{equation}
\mbe\bigl[{\mbf{1}\bigl\{\wh{r}_{n,k}-k^{-1} < t\bigr\}}
G_u\bigr] \leq\mbe\bigl[{\mbf{1}\bigl\{\wh{r}_{n,k}-k^{-1}
< t \bigr\}} G_{u'}\bigr] .\label {equfiniteequ}
\end{equation}
Now, to prove \eqref{dagoal}, it suffices to carefully make $n$ and
$k$ tend to infinity.
By Fatou's lemma and by \eqref{equfiniteequ}, we have for all $k\geq
1$:
\begin{eqnarray*}
\mbe\Bigl[\liminf_n {\mbf{1}\bigl\{\wh{r}_{n,k} -
k^{-1}< t\bigr\}} G_u \Bigr] &\leq& \liminf_n
\mbe\bigl[{\mbf{1}\bigl\{\wh{r}_{n,k} - k^{-1}< t\bigr\}}
G_u\bigr]
\\
& \leq& \limsup_n \mbe\bigl[{\mbf{1}\bigl\{\wh{r}_{n,k} -
k^{-1}< t\bigr\}} G_{u'} \bigr]
\\
& \leq& \mbe\Bigl[\limsup_n{\mbf{1}\bigl\{\wh{r}_{n,k} -
k^{-1}< t\bigr\}} G_{u'} \Bigr]
.
\end{eqnarray*}
Notice that the following inclusions of events hold:
$\{\wh{r}^+_{k} < t + k^{-1} \}\subset\liminf_n \{\wh{r}_{n,k} <
t+k^{-1}\}$, $\limsup_n \{\wh{r}_{n,k} < t+k^{-1}\}\subset\{\wh
{r}^-_{k} \leq t + k^{-1} \}$. Hence, we obtain for all $k$:
\[
\mbe\bigl[{\mbf{1}\bigl\{\wh{r}^+_{k} - k^{-1}< t\bigr\}}
G_u\bigr] \leq\mbe\bigl[{\mbf{1}\bigl\{\wh{r}^-_{k} -
k^{-1}\leq t\bigr\}} G_{u'}\bigr] .
\]
Then, if $t$ is such that $\mbp[\wh{r}= t]=0$, the above expression
can be rewritten as
\begin{eqnarray*}
\mbe\bigl[{\mbf{1}\bigl\{\wh{r}^+_{k} - k^{-1}< t\bigr\}}
G_u {\mbf{1}\{\wh {r}\neq t\}}\bigr] \leq\mbe\bigl[{\mbf{1}\bigl\{
\wh{r}^-_{k} - k^{-1} \leq t\bigr\}} G_{u'}{
\mbf{1}\{\wh{r}\neq t\}}\bigr] .
\end{eqnarray*}
We now let $k\rightarrow\infty$ in the above expression by using
\eqref{equapprox} and the dominated convergence theorem: for any $u
\leq u'$ with $\P(p_{h_0}\leq u)>0$, and any $t \notin D\bydef\{s
\geq0 \telque\mbp[\wh{r}= s]>0\}$, we have
%
%
\begin{equation}
\mbe\bigl[{\mbf{1}\{\wh{r}< t\}} G_u\bigr] \leq\mbe\bigl[{\mbf{1}\{
\wh{r}< t\}} G_{u'}\bigr]. \label{proof-element}
\end{equation}
Since the above expectations may be interpreted as (conditional)
probabilities, the LHS and RHS in \eqref{proof-element} are
left-continuous functions of $t$. Using that $\R^+\cap D^c$ is dense
in $\R^+$ (because $D$ is at most countable), we obtain that \eqref
{proof-element} holds for any $t$.
Finally, the condition \eqref{DCC} comes from Lemma~\ref{UVlemma}.
\end{pf}

\subsection{Finite approximation of step-up procedures}\label
{secfiniteapprox}\label{seccomput-su}

As usual, to lighten notation
$R,p_h, L_\Delta(r),\wh{r}$ denote the random quantities
$R(X),p_h(X),\break L_\Delta(X,r),\wh{r}(X)$. The following result shows how
to derive the continuous step-up procedure (see Definition~\ref
{defcontstepup}) from a limit of finite step-up procedures. It is used
in the proof of Proposition~\ref{corWeakPRDSimpliesfdrcontrol}.

%
\begin{lemma}\label{approxcontstepup}
Consider the step-up procedure $R=L_\Delta(\wh{r})$ on $\cH$ using
$\Lambda$ and with $\wh{r}$ defined in Definition~\ref{defcontstepup}.
Then there exists a sequence of finitely supported measures $\Lambda_n$ on $\cH$ such that, denoting
\[
\wh{r}_{n,k}=\max \bigl\{r\geq0\telque\Lambda_n
\bigl(L_\Delta(r) \bigr)\geq r-k^{-1} \bigr\},
\]
we have
\[
\wh{r}=\lim_{k\rightarrow\infty} \Bigl(\limsup_{n\rightarrow
\infty} \wh{r}_{n,k}
\Bigr) =\lim_{k\rightarrow\infty} \Bigl( \liminf_{n\rightarrow
\infty} \wh{r}_{n,k}
\Bigr) \qquad\mbox{almost surely.}
\]
\end{lemma}

\begin{pf}
We start with the following observation. Consider $(\Lambda_n)$ some
sequence of measures on $\cH$ such that $\Lambda_n(\cH)\equiv M$.
For a fixed realization $x\in X(\Omega)$ of $X$, we consider $f\dvtx
r\in
\mathbb{R}^+\mapsto\Lambda(L_\Delta(x,r))$ and $f_{\Lambda_n}\dvtx
r\in\mathbb{R}^+\mapsto\Lambda_n(L_\Delta(x,r))$. Clearly, $f$
and $f_{\Lambda_n}$ are non-decreasing right-continuous functions.
Using Lemma~S-2.4, we conclude that the desired result
holds provided that, for $P$-almost all $x \in\cX$, $f_{\Lambda_n}$
converges uniformly to $f$ over $[0,M+1]$.

It remains thus to prove that there exists a sequence of finitely
supported measures
$\Lambda_n$ on $\cH$ such that for $P$-almost all $x \in\cX$,
%
%
\begin{equation}
\label{equtoprove} \limsup_{n \rightarrow\infty} \Bigl\{ \sup_{r\in[ 0,M+1]} \bigl|
\Lambda_n \bigl(L_\Delta(x,r) \bigr) - \Lambda
\bigl(L_\Delta(x,r) \bigr)\bigr | \Bigr\} = 0 .
\end{equation}
Denote $\cY$ the product space $\cH^{\mbn}$, endowed with the
product sigma-algebra. For $y\bydef(h_i)_{i \geq1} \in\cY$ some
sequence of hypotheses, denote
$\Lambda_n^{[y]}= M n^{-1} \sum_{i=1}^n \delta_{h_i}$ the suitably
scaled uniform atomic measure on $(h_1,\ldots,h_n)$.

Consider now $Y\bydef(H_i)_{i\geq1} \in\cY$ an i.i.d. sequence of
hypotheses drawn independently of $X$
according to the probability distribution $\Lambda/M$ on $\cH$.
Observe that for any fixed $x\in X(\Omega)$, $L_\Delta(x,r) = \{h\in
\cH\telque p_h(x)\leq\alpha\pi(h)\beta(r)\} = \{h\in\cH\telque
q(h,x)\leq\alpha\beta(r)\}$,
where
\[
q(h,x)\bydef\cases{ p_h(x)/\pi(h), & \quad$\mbox{if } \pi(h)>0;$
\vspace*{2pt}
\cr
0, & \quad$\mbox{if } \pi(h)=0 \mbox{ and } p_h(x)=0;$
\vspace*{2pt}
\cr
\alpha\beta(M+1)+1, &\quad  $\mbox{if } \pi(h)=0 \mbox{ and }
p_h(x)>0$. }
\]
Thus, applying the Glivenko--Cantelli theorem to the i.i.d. variables
$(q(H_i,x))_i$, we deduce that for any $x\in\cX(\Omega)$, $\zeta
(x,y) = \limsup_{n \rightarrow\infty} \sup_{r\in[0,M+1]} | \Lambda_n^{[y]}(L_\Delta(x,r)) - \Lambda(L_\Delta(x,r)) | = 0$ for
$P_Y$-almost all realizations $y$ of $Y$. Observe furthermore that
for any fixed $r$, the function
\[
(\omega,y) \in\Omega\times\cH^\mbn\mapsto\Lambda^{[y]}_n
\bigl(L_\Delta \bigl(X(\omega),r \bigr) \bigr) = M n^{-1} \sum
_{i=1}^n {\mbf{1}\bigl\{p_{h_i}
\bigl(X( \omega) \bigr) \leq\alpha\pi (h_i) \beta(r)\bigr\}}
\]
is a (jointly) measurable function of $(\omega,y)$
by assumption \textup{\eqref{hypmesomega}}.
The inside supremum in \eqref{equtoprove} can be restricted to
rational numbers since the functions involved are right-continuous.
Therefore, $(\omega,y)\mapsto\zeta(X(\omega),y)$ is a jointly
measurable function in its variables. By Fubini's theorem, this implies that
$\mbe_{X,Y}[\zeta(X,Y)]=0$; and thus also, for $P_Y$-almost all $y\in
\cY$, $\zeta(x,y) =0$ for $P$-almost all $x\in\cX$. Since an event
of probability $1$ is non-empty, there exists a fixed $y\in\cY$ such
that $\zeta(x,y) =0$ for $P$-almost all $x\in\cX$, which gives rise
to a sequence of finitely supported measures $\Lambda_n$ satisfying
\eqref{equtoprove}.\vspace*{-2pt}
\end{pf}

\section{Discussion}\label{secdiscuss}\vspace*{-2pt}

\subsection{FDR control for self-consistent, non-step-up
procedures}\vspace*{-2pt}

In some cases, for instance, after a discretization in
$r$ or under a global constraint over the admissible geometry of sets
of rejected hypotheses, the procedure of
interest may not be of the step-up form, while still satisfying the
more general condition~\eqref{SCC}
(called self-consistency, see
Section~\ref{secFDRapproach}). In that situation,
Theorem~\ref{mainthm} does not apply, because the procedure is not
step-up. We proved an extension of Theorem~\ref{mainthm} holding more
generally for (non-increasing)
self-consistent procedures, but point 1 of the theorem is established
only under a stronger
PRDS condition called general PRDS. (On the other hand, the fact that
point 2 of Theorem~\ref{mainthm}
remains valid under the more general condition~\eqref{SCC} is quite immediate.)
The general PRDS condition is defined in terms of the entire process
$X$ and not only
its finite dimensional projections. Therefore, it
is substantially more technical than
finite dimensional PRDS. In particular, it is an open question to
characterize when
does finite dimensional PRDS imply general PRDS (we provide some
sufficient conditions). For simplicity, we deferred
the corresponding study in part~II of the supplementary material
(Blanchard, Delattre and
Roquain \cite
{BDR2011-supp}).\vspace*{-2pt}

\subsection{Power and adaptive procedures}
\label{secdiscusseta}\vspace*{-2pt}

This work has focused on procedures ensuring control of the type I
error as measured by the FDR.
Under this constraint, one would like to maximize power.
We do not address this issue in the present work; a specific multiple
testing power criterion would
have to be defined to begin with, for instance the average number of
correct rejections.
We briefly discuss possible future directions in this regard,
in particular adaptivity properties with respect
to different types of underlying regularity structure.

\textit{Adaptivity of single tests}.
The power of a multiple testing procedure depends primarily on the
power of the underlying
single tests and $p$-values it is built upon. It\vadjust{\goodbreak} is of course desirable
to design individual tests
that are as powerful as possible in
the first place. While this issue actually pertains to the domain of
single hypothesis testing, and
is to this extent quite independent of the methodology studied here, we
briefly discuss this issue in
the light of the specific example of the Gaussian white noise model
$dZ_t = f(t)\,\mathrm{d}t + \sigma\,\mathrm{d}B_t$.
For designing a test of the hypothesis $f(t_0)=0$, we have assumed
known regularity of $f$ and
considered a test based on a simple kernel estimator. Could this be improved?

There is an abundance of literature on adaptive testing of a global
qualitative hypothesis on $f$ (the simplest example being
testing that $f$ is identically zero), where adaptation is understood
with respect to the (H\"{o}lder or Besov) regularity of the alternative
and separation from the null is generally measured in some $L^p$ norm.
This might give some hope that some form of regularity
adaptation is possible also
for testing the local hypothesis $f(t_0)=0$ (and the separation
distance $| f(t_0) |$),
but the situation is in fact quite different and
possibilities for this are severely limited.
This is in essence the same phenomenon as for the existence of
regularity-adaptive confidence intervals
for pointwise estimation of a function, as studied by Cai and Low \cite
{CaiLow04}.
We sketch the main arguments here. First, following the discussion in
D{\"u}mbgen and Spokoiny \cite{DS2001}, Section 2, observe that for
testing of $f_0$ against $g_0$,
the power of the optimal NP test is $\Phi(\sigma\|f_0-g_0\|_2)$,
where $\Phi$ is a non-decreasing function.
Thus, the optimal power of a composite null $H_0$ against an
alternative $H_1$ is upper bounded
(with equality if $H_0$ and $H_1$ are convex) by
$\Phi(\sigma\inf_{f\in\cH_0, g\in\cH_1}\|f-g\|_2)$. In the
case where $H_0=\{f\in\cF, f(t_0)=0\}$ and
$H_1=\{f\in\cF_1, f(t_0)=\eps\}$, where $\cF_1\subset\cF$ are
H\"{o}lder regularity classes,
Cai and Low~\cite{CaiLow04} (Example 1 there) establish that the rate
behavior as
$\eps\rightarrow0$ of this infimum distance
is determined by $\cF$ and not by $\cF_1$.
Therefore, no adaptation to the regularity of the alternative is
possible in this configuration, and it is
necessary to assume some {a priori} known regularity class $\cF$.
On the other hand, these authors show that adaptive confidence
intervals (and hence tests) exist in this setting
provided some additional shape restrictions, such as monotonicity, are
assumed to hold.

\textit{Adaptivity to $\Pi(\cH_0)$ and to the dependence structure}.
For multiple testing over a finite hypothesis space, recent research
has focused
on improving step-up procedures to take into account, on the one hand,
the (unknown) volume $\Pi(\cH_0(P))$ of true null
hypotheses -- which comes as a nuisance parameter reducing the
effective level, see \eqref{equFDRcontrol},
and on the other hand, the dependence structure of the $p$-values. Both
directions suggest further possible developments in the continuous
setting as well.

\begin{appendix}

\section*{Appendix: PRDS statements}\label{secPRDS-appendix}

%
\begin{lemma}\label{PRDS-cdf}
The $p$-value process $\mathbf{p}(X)=\{p_t(X),t\in I\}$ defined by
\eqref{pvaluefit} is finite dimensional weak PRDS (on any subset).
\end{lemma}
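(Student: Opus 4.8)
The plan is to reduce the claim to an elementary monotonicity property of the partial sums of a multinomial count vector, and then to establish that property by an explicit coupling.

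First I would fix a finite subset $\cS=\{t_1<\cdots<t_n\}$ of $I$ and an index $t_j\in\cS$, and use the explicit form $p_{t_i}(X)=G_{t_i}(N_{t_i})$, where $N_{t_i}:=\sum_{k=1}^m\ind{X_k\le t_i}=m\mathbb{F}_m(X,t_i)$. By construction $N_{t_1}\le\cdots\le N_{t_n}$ and each $G_{t_i}$ is nonincreasing on $\{0,\dots,m\}$, so the finite $p$-value vector $\bp_{\cS}(X)$ is a coordinatewise nonincreasing function of $\mathbf{N}:=(N_{t_1},\dots,N_{t_n})$. Hence a measurable nondecreasing set $D\subset[0,1]^{\cS}$ pulls back under this transform to a downward-closed set $B$ in the range of $\mathbf{N}$, and $\{p_{t_j}(X)\le u\}=\{N_{t_j}\ge k(u)\}$ for a nonincreasing integer-valued threshold $k(\cdot)$. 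Since $k$ is nonincreasing in $u$, proving that $u\mapsto\prob{\bp_{\cS}(X)\in D\mid p_{t_j}(X)\le u}$ is nondecreasing on $\{u\telque\prob{p_{t_j}(X)\le u}>0\}$ reduces to establishing
\[
\ell\mapsto\prob{\mathbf{N}\in B\mid N_{t_j}\ge\ell}\ \text{ is nonincreasing, for every downward-closed }B.
\]

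The core step is to show that $\ell\mapsto\prob{\mathbf{N}\in B\mid N_{t_j}=\ell}$ is already nonincreasing for downward-closed $B$. Conditionally on $N_{t_j}=\ell$, the vector $\mathbf{N}$ splits into two independent pieces: a ``left'' multinomial count vector over the cells $(-\infty,t_1],(t_1,t_2],\dots,(t_{j-1},t_j]$ with $\ell$ trials (whose partial sums are $N_{t_1},\dots,N_{t_j}$), and a ``right'' multinomial count vector over the cells $(t_j,t_{j+1}],\dots,(t_{n-1},t_n],(t_n,\infty)$ with $m-\ell$ trials (whose partial sums, shifted by $\ell$, are $N_{t_{j+1}},\dots,N_{t_n}$). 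I would then build a monotone coupling of the conditional laws at levels $\ell$ and $\ell+1$: add one independent categorical trial to the left multinomial, which can only increase each of $N_{t_1},\dots,N_{t_{j-1}}$ and increases $N_{t_j}$ by exactly $1$; and remove one uniformly chosen trial from the right multinomial, which decreases each partial sum of the right block by at most $1$. A term-by-term check then yields $\mathbf{N}^{(\ell)}\le\mathbf{N}^{(\ell+1)}$ almost surely (for an index $i>j$ the gain of $1$ in the base $\ell$ exactly offsets the possible loss of $1$ in the right block), so $\prob{\mathbf{N}\in B\mid N_{t_j}=\ell}$ is nonincreasing in $\ell$. Finally, writing $\prob{\mathbf{N}\in B\mid N_{t_j}\ge\ell}=\e{h(N_{t_j})\mid N_{t_j}\ge\ell}$ with $h(\ell'):=\prob{\mathbf{N}\in B\mid N_{t_j}=\ell'}$ nonincreasing, and using that the conditional law of $N_{t_j}$ given $\{N_{t_j}\ge\ell+1\}$ stochastically dominates the one given $\{N_{t_j}\ge\ell\}$, the required monotonicity in $\ell$ follows. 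Since $t_j\in\cS$ was arbitrary, this gives weak PRDS on $\cS$, hence on $\cH'\cap\cS$ for any $\cH'\subset I$, i.e. finite dimensional weak PRDS on any subset.

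The main obstacle is the explicit monotone coupling in the core step, in particular checking $\mathbf{N}^{(\ell)}\le\mathbf{N}^{(\ell+1)}$ for the coordinates $i>j$, where both the base value $\ell$ and the number of remaining ``right'' trials $m-\ell$ change simultaneously; the remaining parts are routine bookkeeping. An alternative to the coupling is to verify directly that the joint law of $\mathbf{N}$ is $\mathrm{MTP}_2$ --- the multinomial coefficient $\binom{m}{n_1,\,n_2-n_1,\dots,\,m-n_n}$ is log-supermodular in $(n_1,\dots,n_n)$ because $k\mapsto\log(k!)$ is convex, and the remaining factor $\prod_c q_c^{\,m_c}$ is log-affine --- which likewise implies the monotonicity displayed above, and hence (after translating back as in the first paragraph) the weak PRDS property; here one may also invoke Lemma~\ref{strongimpliesweak}.
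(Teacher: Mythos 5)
Your argument is correct, but it takes a genuinely different route from the paper. The paper translates the conditioning event $\{p_{t_0}(X)\le u\}$ into $\{X_{(k)}\le t_0\}$ for the order statistics of the sample, observes that the pulled-back set $D'$ is a nondecreasing subset of $\mathbb{R}^m$, and then invokes a positive-regression-dependence property of the order statistics $(X_{(1)},\dots,X_{(m)})$ (a separate lemma from the supplementary material), together with a short two-variable monotone-conditional-expectation argument to pass from conditioning on $X_{(k-1)}\le t_0$ to conditioning on $X_{(k)}\le t_0$. You instead push everything onto the count vector $\mathbf{N}=(m\mathbb{F}_m(X,t_i))_i$, exploit the exact conditional multinomial splitting given $N_{t_j}=\ell$ into independent left and right blocks, and build an explicit monotone coupling between the levels $\ell$ and $\ell+1$ (your key check, that for $i>j$ the $+1$ in the base offsets the at-most-$1$ loss from deleting a right trial, is the right one and goes through; deleting a uniformly chosen trial from an i.i.d. categorical sample does yield the multinomial law with one fewer trial). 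The final passage from $\{N_{t_j}=\ell\}$ to $\{N_{t_j}\ge \ell\}$ via stochastic domination of the upper-truncated laws is the exact analogue of the paper's $f(a,b)$ step. What your approach buys is self-containedness: it avoids the order-statistics PRD lemma entirely and reduces the claim to elementary multinomial combinatorics; your MTP$_2$ remark even upgrades the conclusion to finite dimensional \emph{strong} PRDS (whence weak PRDS by Lemma~\ref{strongimpliesweak}), which is more than the statement requires. What the paper's route buys is brevity, at the price of outsourcing the positive dependence of order statistics to the supplement. One small point to make explicit in a polished write-up: the degenerate cases $F(t_j)\in\{0,1\}$ (where $N_{t_j}$ is a.s.\ constant) and the fact that only the finitely many thresholds $u=G_{t_j}(k)$ with $\P(p_{t_j}\le u)>0$ need to be checked, both of which the paper handles in its opening lines.
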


\begin{pf}
Let us consider a finite subset $(t_j)_{0\leq j \leq N-1}$ of $I$ and
$D$ a non-decreasing measurable subset of $[0,1]^{N}$.
Let us prove that the function
$
u\mapsto\mbp[\mathbf{p}(X) \in D | p_{t_0}(X)\leq u]
$
is non-decreasing on $\{u\in[0,1]\telque\P(p_{t_0}(X)\leq u)>0\}$.
If $F(t_0)\in\{0,1\}$, the result is trivial. We thus assume that
$F(t_0)\in(0,1)$, so that $\mathcal{U}_{t_0}=\{
G_{t_0}(k),k=m,m-1,\ldots,0\}$ contains only increasing points of
$(0,1]$. Without loss of generality, we only have to prove the
non-decreasing property for $u\in\mathcal{U}_{t_0}$.
Since $G_{t_0}$ is decreasing from $\{0,\ldots,m\}$ to $\mathcal
{U}_{t_0}$, we have $p_{t_0}(X)\leq G_{t_0}(k) \Longleftrightarrow
m\mathbb{F}_m(X,t_0) \geq k \Longleftrightarrow X_{(k)} \leq t_0$
(letting $X_{(0)}=-\infty$). We thus have to prove that for any $k$, $
1\leq k\leq m$,
\renewcommand{\theequation}{\arabic{equation}}
\setcounter{equation}{29}
\begin{equation}
\label{equfinalfitmulti}\mbp\bigl[(X_{(1)}, \ldots,
X_{(m)}) \in D' | X_{(k-1)}\leq t_0
\bigr] \geq \mbp\bigl[(X_{(1)}, \ldots, X_{(m)}) \in
D' | X_{(k)}\leq t_0\bigr],
\end{equation}
where $D'=\{x\in\mathbb{R}^m \telque(p_{t_j}(x))_{0\leq j\leq N-1}
\in D\}$ is a non-decreasing subset of $\mathbb{R}^m$ (because
$\mathbf
{p}$ is coordinate wise non-decreasing, that is, $x\leq x'\Rightarrow
\forall t, {p}_{t}(x)\leq{p}_{t}(x')$).
Using that the family of order statistics $\{X_{(i)}\}_i$ has positive
regression dependency (see Lemma~S-2.1), we derive
that the function $f(a,b)=\mbe[(X_{(1)}, \ldots, X_{(m)}) \in D' |
X_{(k-1)}=a,X_{(k)}=b]$ is non-decreasing in $a$ and $b$.
Therefore, denoting $\gamma=\mbp[X_{(k)}\leq t_0 | X_{(k-1)} \leq t_0]$, we get
\begin{eqnarray*}
\mbp\bigl[(X_{(1)}, \ldots, X_{(m)}) \in
D' | X_{(k-1)}\leq t_0\bigr] &= & \gamma\mbe
\bigl[f(X_{(k-1)},X_{(k)}) | X_{(k-1)}\leq
t_0,X_{(k)}\leq t_0\bigr]
\\
&&{}+ (1-\gamma) \mbe\bigl[f(X_{(k-1)},X_{(k)}) |
X_{(k-1)}\leq t_0<X_{(k)}\bigr]
\\
&\geq& \mbe\bigl[f(X_{(k-1)},X_{(k)}) | X_{(k-1)}\leq
t_0,X_{(k)}\leq t_0\bigr],
\end{eqnarray*}
which provides \eqref{equfinalfitmulti} and concludes the proof.
\end{pf}

%
\begin{lemma}\label{PRDS-pois}
The $p$-value process $\mathbf{p}(X)=\{p_t(X),t\in[0,1]\}$ defined
by \eqref{pvaluespois} is finite dimensional strong PRDS (on any subset).
\end{lemma}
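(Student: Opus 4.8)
The plan is to establish the finite dimensional \emph{strong} PRDS property by reducing it, for any finite subset of time points, to a statement about a finite Gaussian-type vector with nonnegative correlations, except that here the underlying finite-dimensional quantities are Poisson increments over possibly overlapping windows. First I would fix a finite subset $\{t_0,t_1,\dots,t_{N-1}\}\subset[0,1]$ and note that, writing $I_j = [(t_j-\eta)\vee 0,(t_j+\eta)\wedge 1]$, each $p$-value $p_{t_j}(X) = G_{t_j}(N_{I_j})$ is a deterministic nonincreasing function of the count $N_{I_j} := N_{(t_j+\eta)\wedge 1} - N_{(t_j-\eta)\vee 0}$. So the whole finite $p$-value family is a coordinatewise nonincreasing function of the random vector $(N_{I_0},\dots,N_{I_{N-1}})$; consequently, for any nondecreasing measurable $D\subset[0,1]^N$, the set $\{(n_0,\dots,n_{N-1}): (G_{t_j}(n_j))_j \in D\}$ is a \emph{nonincreasing} subset of $\mathbb{Z}_{\geq 0}^N$. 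Hence it suffices to show that the law of the count vector has a suitable monotone-likelihood / association structure: concretely, that for each fixed index $j_0$, the conditional law of $(N_{I_j})_{j}$ given $N_{I_{j_0}} = n$ is stochastically nonincreasing in $n$ (stochastically nondecreasing towards the ``small counts'' direction), which translates the strong PRDS statement on $p_{t_{j_0}}$ exactly.

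The key step will be to exploit the independent-increments structure of the Poisson process. I would decompose the windows $I_0,\dots,I_{N-1}$ into the atoms generated by the endpoints: there is a finite partition of $\bigcup_j I_j$ into disjoint intervals $A_1,\dots,A_m$ such that every $I_j$ is a union of some of the $A_\ell$'s. Writing $\xi_\ell := N_{A_\ell}$, the $\xi_\ell$ are \emph{independent} Poisson variables, and $N_{I_j} = \sum_{\ell \in S_j}\xi_\ell$ for the index set $S_j\subset\{1,\dots,m\}$ with $A_\ell\subset I_j \Leftrightarrow \ell\in S_j$. Thus the count vector $(N_{I_j})_j$ is a \emph{nonnegative linear (0/1-coefficient) image} of a vector of independent nonnegative random variables. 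Independent random variables are positively associated, and positive association is preserved under coordinatewise nondecreasing maps; this gives association of $(N_{I_j})_j$, which in turn yields the required monotonicity of $u\mapsto \P(\mathbf p(X)\in D \mid p_{t_{j_0}}(X) = u)$. More carefully, since we need conditioning on $p_{t_{j_0}}(X)=u$, i.e. on $N_{I_{j_0}} = n$ for a particular $n$, I would argue directly: condition on $\xi_\ell$ for $\ell\notin S_{j_0}$ (these are independent of $N_{I_{j_0}}$ and appear monotonically in the remaining $N_{I_j}$'s), and then handle $(\xi_\ell)_{\ell\in S_{j_0}}$ given $\sum_{\ell\in S_{j_0}}\xi_\ell = n$. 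The conditional law of a vector of independent Poissons given their sum is multinomial, and one checks that as $n$ increases the multinomial vector is stochastically increasing coordinatewise; combined with the fact that each $N_{I_j}$ with $S_j\cap S_{j_0}\neq\emptyset$ is nondecreasing in those coordinates and in the independently-distributed remainder, one obtains that the conditional law of $(N_{I_j})_j$ given $N_{I_{j_0}}=n$ is stochastically nondecreasing in $n$, hence $u\mapsto\P(\mathbf p\in D\mid p_{t_{j_0}}=u)$ is nondecreasing on the support of $p_{t_{j_0}}(X)$. Since this holds for every finite subset and every $j_0$, the process is finite dimensional strong PRDS on any subset.

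I would also dispose of a degenerate case up front, parallel to the c.d.f.\ proof: if the Poisson parameter $\delta_{t_{j_0},\eta}$ of the benchmark distribution governing $G_{t_{j_0}}$ is zero, or more generally if $\P(p_{t_{j_0}}(X)\le u)$ jumps are such that the conditioning set is trivial, the monotonicity is automatic; otherwise we restrict, without loss of generality, to $u$ in the (at most countable) set of atoms of $p_{t_{j_0}}(X)$, i.e.\ to the values $G(n)$, $n=0,1,2,\dots$, exactly as in Lemma~\ref{PRDS-cdf}. The main obstacle, and where I would spend the most care, is the conditioning step: strong PRDS requires controlling the conditional distribution given the exact value $N_{I_{j_0}}=n$ rather than the tail event, and when the windows $I_{j_0}$ and $I_j$ overlap only partially one must simultaneously track the shared atoms (which follow a multinomial-given-sum law, stochastically increasing in $n$) and the disjoint atoms (independent of the conditioning). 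Once the stochastic monotonicity of the conditional count vector in $n$ is in hand, the rest is the routine composition with the coordinatewise nonincreasing map $(n_j)_j\mapsto (G_{t_j}(n_j))_j$ and the nondecreasing indicator $\ind{\cdot\in D}$. I note finally that this also re-proves, via Lemma~\ref{strongimpliesweak}, the finite dimensional weak PRDS property needed to invoke Theorem~\ref{mainthm} in Section~\ref{sec:pois-ex}.
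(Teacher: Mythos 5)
Your proposal is correct and is essentially the paper's argument: both rest on the independent-increments structure of the Poisson process, the reduction to a nonincreasing set of count vectors, and the observation that conditioning on one additional point in the central window can only increase every count $N_{I_j}$, hence can only decrease membership in that set. The only (cosmetic) difference is that you discretize into atom counts and invoke the stochastic monotonicity in $n$ of a multinomial given its sum, whereas the paper works directly with the point locations (i.i.d.\ with density proportional to $\lambda$ given the count) and couples the $n$- and $(n+1)$-point configurations explicitly --- the coupling proving your multinomial claim is exactly the paper's ``add one i.i.d.\ point'' step.
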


\begin{pf}
Let $M_t=N_{(t+\eta)\wedge1}-N_{(t-\eta)\vee0}$ for any $t\in
[0,1]$. Fix $(t_j)_{0\leq j \leq q-1}\in[0,1]^q$ and assume $t_0\in
[\eta,1-\eta]$ (the other case can be proved similarly). Take a
non-decreasing measurable set $D\subset[0,1]^q$ and consider the set
$D'=\{ (M_{t_j})_{0\leq j \leq q-1}\in\mathbb{N}^q \telque
(G_{t_j}(M_{t_j}))_{0\leq j \leq q-1} \in D\},$ which is non-increasing
on $\mathbb{N}^q$ and measurable. We thus aim to prove that for any $n
\geq0$,
%
%
\renewcommand{\theequation}{\arabic{equation}}
\setcounter{equation}{30}
\begin{equation}
\label{PRDS-Poisson}\mbp\bigl[(M_{t_j})_{0\leq j \leq q-1}
\in D' | M_{t_0}=n+1\bigr] \leq\mbp
\bigl[(M_{t_j})_{0\leq j \leq q-1} \in D' |
M_{t_0}=n\bigr].
\end{equation}
Denote by $X_1 < \cdots< X_{k_X}$, $Y_1 < \cdots< Y_{k_Y}$ and $Z_1 <
\cdots< Z_{k_Z}$ the jump times of the process $(N_t)_{t\in[0,1]}$
within the (disjoint) subsets $[0,t_0-\eta)$, $[t_0-\eta,t_0+\eta]$
and $(t_0+\eta,1]$, respectively. Remark that $k_Y=M_{t_0}$ with our notation.
Since $(N_t)_{t\in[0,1]}$ is a Poisson process, the family $\{
(X_i,1\leq i \leq k_X, k_X),(Y_i,1\leq i \leq k_Y, k_Y),(Z_i,1\leq i
\leq k_Z, k_Z)\} $, contains mutually independent elements.
Furthermore, the distribution of $(Y_1 , \ldots,Y_{k_Y})$
conditionally on $k_Y=n$ is equal to the distribution of the order
statistics of a sample $(Y'_1,\ldots,Y'_n)$ of i.i.d. random variables
with common density $t\mapsto\lambda(t)/ \int_{ [t_0-\eta,t_0+\eta
]} \lambda(s) \,\mathrm{d}s $ on $[t_0-\eta,t_0+\eta]$ (w.r.t. the
Lebesgue measure).
Next, denoting $I_t=[(t-\eta)\vee0,(t+\eta)\wedge1]$, for any $t\in
[0,1]$, we can write:
\begin{eqnarray*}
&&\mbp\bigl[(M_{t_j})_{0\leq j \leq q-1} \in D'
| M_{t_0}=n+1\bigr]
\\
&&\quad=\mbp\Biggl[ \Biggl( \sum_{i=1}^{k_X}
{\mbf{1}\{X_i \in I_{t_j}\}}+\sum
_{i=1}^{n+1} {\mbf{1}\bigl\{Y'_i
\in I_{t_j}\bigr\}} +\sum_{i=1}^{k_Z}
{\mbf {1}\{Z_i \in I_{t_j}\}} \Biggr)_{0\leq j \leq q-1} \in
D' \Biggr]
\\
&&\quad=\mbp\Biggl[ \Biggl(\sum_{i=1}^{k_X}
{\mbf{1}\{X_i \in I_{t_j}\}}+\sum
_{i=1}^{n} {\mbf{1}\bigl\{Y'_i
\in I_{t_j}\bigr\}} +\sum_{i=1}^{k_Z}
{\mbf {1}\{Z_i \in I_{t_j}\}} \Biggr)_{0\leq j \leq q-1}\\
&&\hspace*{32pt}{} \in
D' - \bigl({\mbf{1}\bigl\{Y'_{n+1} \in
I_{t_j}\bigr\}} \bigr)_j \Biggr]
\\
&&\quad \leq\mbp\Biggl[ \Biggl(\sum_{i=1}^{k_X}
{\mbf{1}\{X_i \in I_{t_j}\}}+\sum
_{i=1}^{n} {\mbf{1}\bigl\{Y'_i
\in I_{t_j}\bigr\}} +\sum_{i=1}^{k_Z}
{\mbf {1}\{Z_i \in I_{t_j}\}} \Biggr)_{0\leq j \leq q-1} \in
D' \Biggr]
\\
&&\quad=\mbp\bigl[(M_{t_j})_{0\leq j \leq q-1} \in D'
| M_{t_0}=n\bigr],
\end{eqnarray*}
the inequality coming from $ D' - ({\mbf{1}\{Y'_{n+1} \in I_{t_j}\}})_j
\subset D'$, because $D'$ is non-increasing. This proves \eqref
{PRDS-Poisson} and concludes the proof.
\end{pf}
\end{appendix}

\section*{Acknowledgements}
This work was supported in part by the IST Programme of the European
Community, under the PASCAL Network of Excellence, IST-2002-506778.
The third author was supported by the French Agence Nationale de la Recherche
(ANR grant references: ANR-09-JCJC-0027-01, ANR-PARCIMONIE,
ANR-09-JCJC-0101-01).
The three authors where supported by the French ministry of foreign and european
affairs (EGIDE -- PROCOPE project number 21887 NJ).

\begin{supplement}
\stitle{Supplement to: ``Testing over a continuum of null hypotheses
with False Discovery Rate control''}
\slink[doi]{10.3150/12-BEJ488SUPP} 
\sdatatype{.pdf}
\sfilename{BEJ488\_supp.pdf}
\sdescription{This supplement provides some technical results and
introduces the so-called
general PRDS condition, which is a stronger assumption than the finite
dimensional PRDS condition.
This condition is useful to prove FDR control for procedures which are
not necessarily of the step-up type.}
\end{supplement}

%
%

\printhistory

\end{document}